\newtheorem{theorem}{Theorem}
\newtheorem{lemma}[theorem]{Lemma}
\def\BibTeX{{\rm B\kern-.05em{\sc i\kern-.025em b}\kern-.08em
    T\kern-.1667em\lower.7ex\hbox{E}\kern-.125emX}}
\begin{document}

\title{
Assessing Quantum Layout Synthesis Tools via Known Optimal-SWAP Cost Benchmarks}

\author{
\normalsize Shuohao Ping$^1$, Wan-Hsuan Lin$^1$, Daniel Bochen Tan$^2$, Jason Cong$^1$\\
\normalsize $^1$Computer Science Department, University of California, Los Angeles, CA, USA,\\ 
$^2$Harvard University
 Cambridge, MA, USA \\
\normalsize \{sp1831, wanhsuanlin\}@ucla.edu; danieltan@g.harvard.edu; cong@cs.ucla.edu
\vspace{-3mm}
}

\date{}



\maketitle

\begin{abstract}
Quantum layout synthesis (QLS) is a critical step in quantum program compilation for superconducting quantum computers, involving the insertion of SWAP gates to satisfy hardware connectivity constraints.
While previous works have introduced SWAP-free benchmarks with known-optimal depths for evaluating QLS tools, these benchmarks overlook SWAP count—a key performance metric.
Real-world applications often require SWAP gates, making SWAP-free benchmarks insufficient for fully assessing QLS tool performance.
To address this limitation, we introduce QUBIKOS, a benchmark set with provable-optimal SWAP counts and non-trivial circuit structures. 
For the first time, we are able to quantify the optimality gaps of SWAP gate usages of the leading QLS algorithms, which are surprisingly large: LightSabre from IBM delivers the best performance with an optimality gap of 63x, followed by ML-QLS with an optimality gap of 117x.
Similarly, QMAP and t$\mathbf{\ket{\text{ket}}}$ exhibit significantly larger gaps of 250x and 330x, respectively. This highlights the need for further advancements in QLS methodologies.
Beyond evaluation, QUBIKOS offers valuable insights for guiding the development of future QLS tools, as demonstrated through an analysis of a suboptimal case in LightSABRE. 
This underscores QUBIKOS's utility as both an evaluation framework and a tool for advancing QLS research.

\end{abstract}

\section{Introduction}

Superconducting qubit has emerged as a promising solution for realizing large scale quantum computing.
However, the limited qubit connectivity in superconducting quantum computers restricts two-qubit gates to specific pairs of qubits.
In order to transform quantum circuits into hardware-executable forms, we need quantum layout synthesis (QLS) to insert SWAP gates that enable arbitrary two-qubit gates in the circuit.
However, these SWAP gates increase circuit size and depth, reducing overall fidelity. 
Thus, minimizing the number of inserted SWAP gates is a primary objective of QLS.
 

The QLS problem is proved to be NP-hard~\cite{siraichi_qubitallocation_2018}. 
The existing methods fall into two classes: exact approaches or heuristic algorithms. 
The exact approaches formulate a QLS problem into a constraint satisfaction problem and can obtain optimal results~\cite{wille2014optimal,bhattacharjee2019muqut,wille2019mapping,tan2020olsq, tan2021olsqga, zhang_time-optimal_2021, molavi2022satmap,nannicini2022optimal, lin2023olsq2} but it cannot scale up to solve problems with a hundred qubits.
To address scalability issues, several heuristic algorithms are proposed~\cite{siraichi_qubitallocation_2018,ho2018cirq,zulehner2018mapping_to_ibm_qx,web18-ibm-qiskit, zulehner_efficient_2019, siraichi_qubit_2019, li_sabre_2019,murali_formal_2019,sivarajah_tket_2020,kole_improved_2020,liu_notallswaparethesame_2022,wu_robust_2022,fan_QLSML_2022,huang2022reinforcement,park2022fsqm,huang2024ctqr,huang2024dear,lin2024mlqls}.
These methods are efficient, yet exhibit performance degradation as the problem size increases.
While there is a clear runtime advantage of heuristic algorithms over exact methods, the trade-off in solution quality remains uncertain due to the lack of benchmarks with known optimal SWAP counts.
Measuring the optimality gap using exact tools is an alternative, but this approach becomes impractical for large circuits due to scalability issues.
Therefore, having a benchmark set with known optimal SWAP counts is crucial for enabling solid comparison of QLS tools.

Few studies have attempted to establish such benchmark sets.
Tan and Cong introduced the QUEKO benchmark with known-optimal depth and zero SWAP gates\cite{tan2020queko}. 
However, this benchmark is limited in scope; it can be addressed by QLS tools employing subgraph isomorphism and fails to evaluate tool performance on circuits that require SWAP gates.
Li et al. proposed the QUEKNO benchmark~\cite{li2023QUEKNO}, which generates circuits requiring SWAP gates by deriving a series of mappings from an initial mapping and selected SWAP operations.
However, these circuits do not have known optimal SWAP counts, and the solutions are only claimed to be near-optimal without rigorously quantifying the optimality gap.
As a result, QUEKNO cannot measure the optimality gap of QLS tools.
To overcome the limitations of existing benchmarks, we propose QUBIKOS (\textbf{QU}antum \textbf{B}enchmark w\textbf{I}th \textbf{K}nown \textbf{O}ptimal \textbf{S}wap counts), which is the first benchmark set with a known optimal non-zero SWAP count. It permits a more comprehensive assesment of QLS tools and supports the development of future advancements in this area.


With QUBIKOS, we evaluate four QLS tools on four architectures and observe the optimality gaps. We find that as the size of the architecture grows, the optimality gap also grows from 1x to 233.97x even for the best performing tools. This shows an urgent need to have a more effective QLS tool to fully utilize the power of large systems. Furthermore, for architectures with a similar size, the structure of the architecture also affects the optimality gap. For instance, the optimality gap for IBM Rochester is seven times larger than Google Sycamore as a result of sparse connectivity and fewer axes of symmetry. 

QUBIKOS provides a qualitative evaluation on the performance of QLS tools by measuring the optimality gap between QLS results and optimal results. We can control the circuit size and number of optimal SWAP gates in the QUBIKOS circuit, which allows us to test the QLS tools under different conditions. QUBIKOS also offers insights on improving the performance of QLS tools by analyzing examples when QLS tools fails to find optimal SWAP gate counts. 


The rest of this paper is organized as follows. 
Section~\ref{sec:background} introduces the quantum layout synthesis problem.
Section~\ref{sec:construction} details the construction of QUBIKOS benchmark.
Section~\ref{sec:evaluation}
evaluates existing QLS tools with QUBIKOS, and Section~\ref{sec:conclusion} concludes the paper.

\section{Quantum Layout Synthesis}
\label{sec:background}
\begin{figure}[t]
\centering
\subfigure[]{
\includegraphics[width=0.5\linewidth]{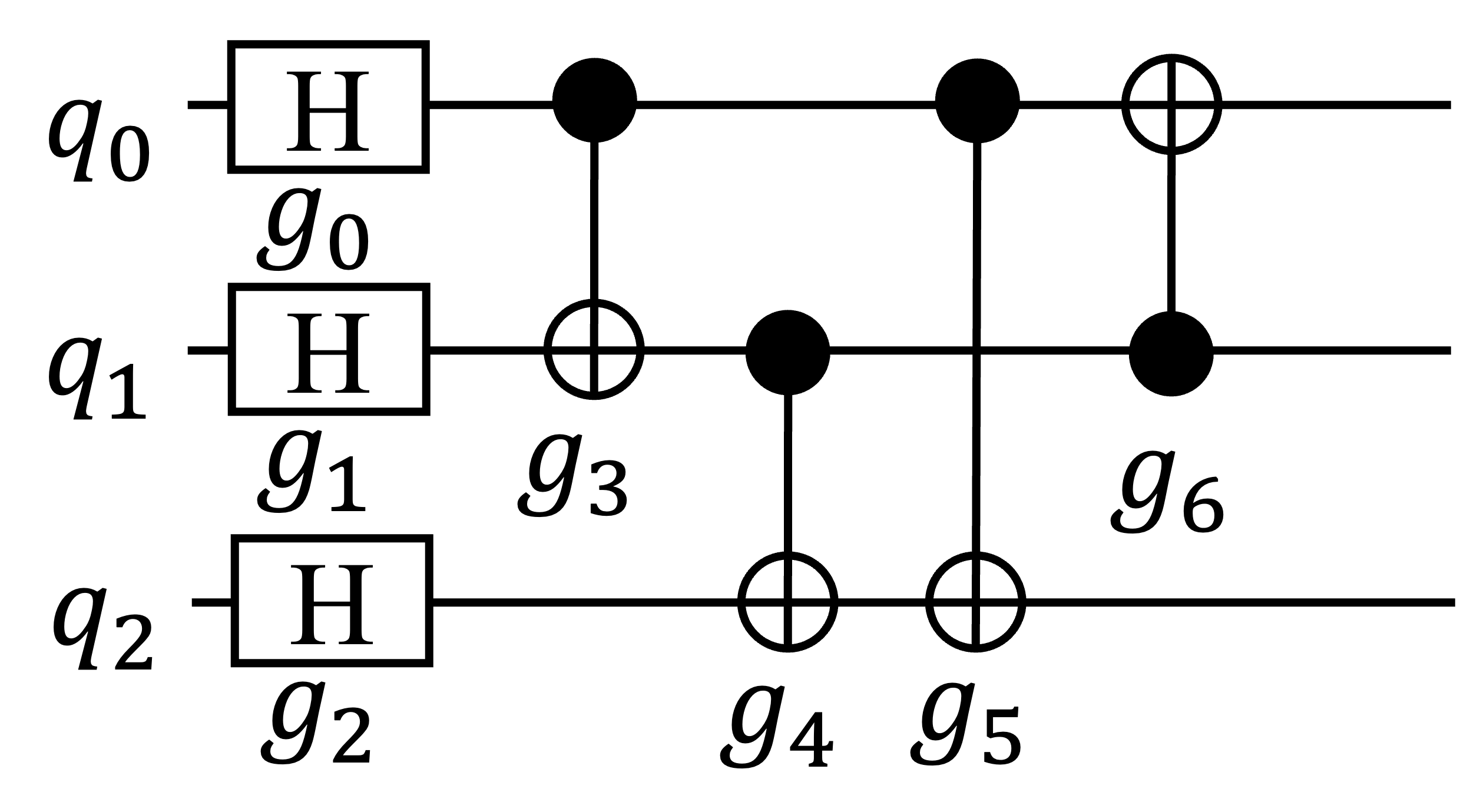}
}
\hfill
\raisebox{0.25cm}{
\subfigure[]{
\includegraphics[width=0.26\linewidth]{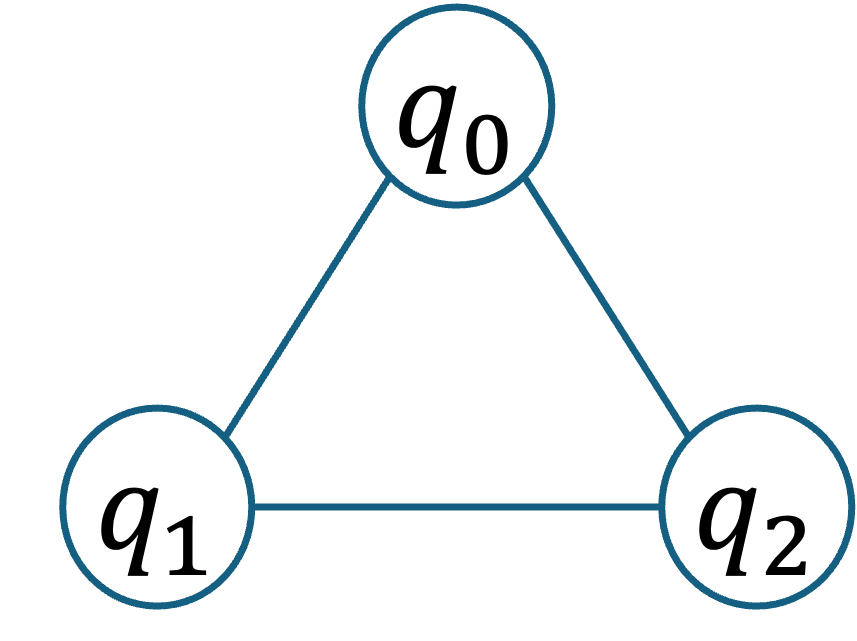}
}
}
\hfill
\subfigure[]{
\includegraphics[width=0.35\linewidth]{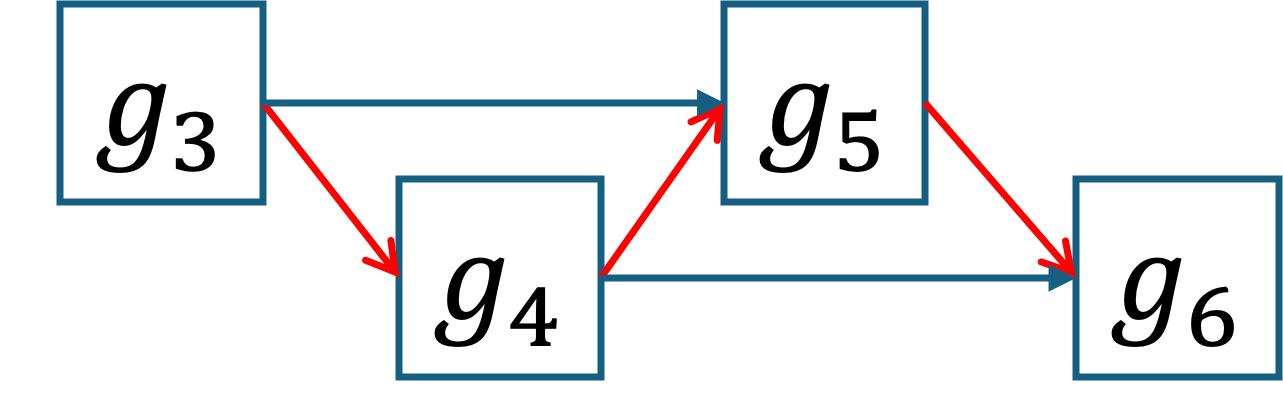}
}
\hfill
\subfigure[]{
\includegraphics[width=0.4\linewidth]{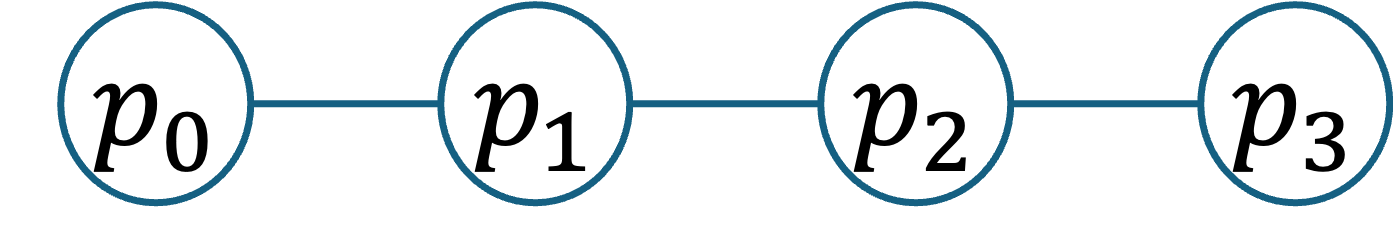}
}
\hfill
\subfigure[]{
\includegraphics[width=0.5\linewidth]{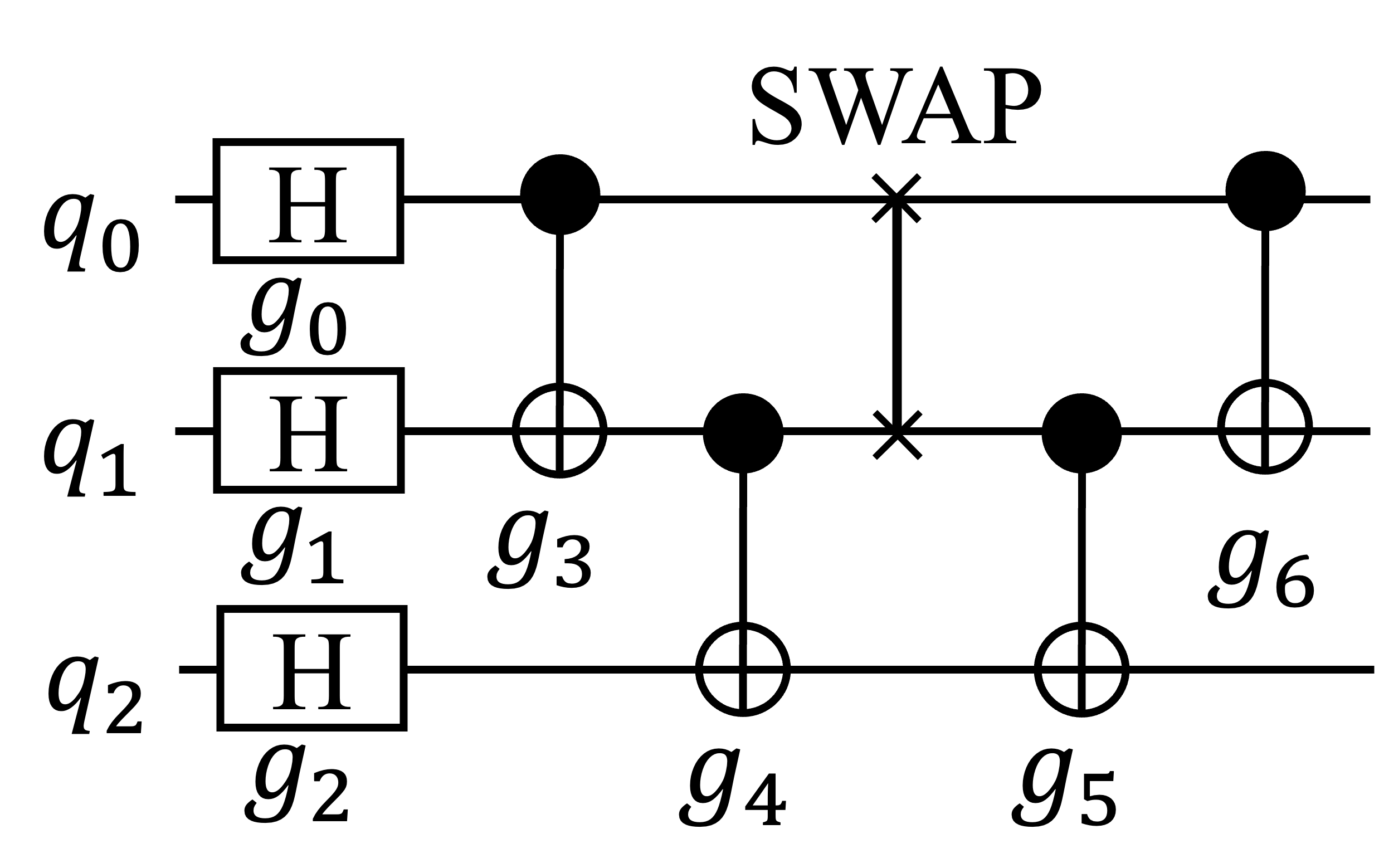}
}
\caption{(a) An example of a quantum circuit. (b) The interaction graph for the circuit shown in (a). (c) Gate dependency graph derived from the circuit. The red edges mark a dependence chain from $g_3$ to $g_5$.(d) A device coupling graph with 4 physical qubits in a line architecture. (b) A transpiled circuit of (a) with one inserted SWAP gate for the line architecture with mapping $q_0\rightarrow p_0, q_1\rightarrow p_1$, and $q_2\rightarrow p_2$. }
\label{fig:Simplecircuit}
\end{figure}

The QLS problem arises from the need to map gates in a quantum circuit to a superconducting quantum processor characterized by a coupling graph and perform necessary transformations to the circuit via inserting SWAP gates.
The inputs to the QLS problem are: \textit{quantum circuit} and \textit{coupling graph}. 

A \textit{quantum circuit} is defined as a sequence of
single and two-qubit gates.
In this paper, we denote
the set of program qubits by $Q$, a program qubit by $q\in Q$, the set of single-qubit gates by $\mathcal{G}_1$, the set of two-qubit gates by $\mathcal{G}_2$, and the overall gate set by
$\mathcal{G}=\mathcal{G}_1 \cup \mathcal{G}_2$. 
In addition, for a two-qubit gate $g(q,q')\in \mathcal{G}_2$, we use $g[0]=q$ and $g[1]=q'$ to denote the two program qubits $g$ operates on. Figure \ref{fig:Simplecircuit}(a) is an example of a quantum circuit where H gates are single qubit gates, and CNOT gates are two-qubit gates.

A \textit{coupling graph} $G_C(P,E_P)$ describes the connectivity of physical qubits for a given hardware where $P$ is a set of physical qubits and $E_P$ is a set of edges, where two-qubit gates can be performed. 
Figure~\ref{fig:Simplecircuit}(d) depicts a coupling graph. 
We define the function $\mathit{Neighbor(v,G)}$ to be the set of adjacent vertices of $v$ in $G$.
For example, the adjacent physical qubits of a physical qubit $p\in P$ is denoted as $\mathit{Neighbor(p,G_C)}$,


According to a quantum circuit $C$, we define the \textit{interaction graph} $G_I(Q,E_Q)$, where each program qubit is a node and an edge exists between $q$  and $q'$ if $g(q,q')\in \mathcal{G}_2$. 
Figure \ref{fig:Simplecircuit}(b) demonstrates the interaction graph of the circuit in Figure \ref{fig:Simplecircuit}(a). 
The neighbor of a program qubit $q\in Q$ in the interaction graph $G_I(Q,E_Q)$ is designated by $\mathit{Neighbor(q,G_I)}$. 

The \textit{gate dependency graph} $D(\mathcal{G}_2$,$E_G)$ is a directed acyclic graph that represents the execution order of two-qubit gates.
Single-qubit gates are excluded from $D$ since they do not impose connectivity constraints, and can be inserted back to the circuit after QLS.
Nodes in $D$ represent two-qubit gates in the circuit and an edge $(g,g')$ is drawn if $g'$ can be executed immediately after $g$. 
Figure \ref{fig:Simplecircuit}(c) shows an example of $D$ for the circuit in Figure \ref{fig:Simplecircuit}(a). 
There is an edge from $g_3$ to $g_4$ because they both act on $q_1$, and $g_4$ comes after $g_3$. 
In general, for any given gates $g,g'\in G_2$, if $g'$ must execute before $g$, there exists a path from $g'$ to $g$ in $D$. 
We define the set of previous gates of gate $g$ as $\mathit{Prev}(g)=\{g'|\exists\text{path from } g' \text{ to } g \text{ in }D\}$. 

The output of the QLS problem is a \textit{transpiled circuit} with inserted SWAP gates, and an initial mapping from program qubits to physical qubits $f:Q\rightarrow P$. 
Figure \ref{fig:Simplecircuit}(e) illustrates a valid QLS result using one SWAP to map the circuit in Figure \ref{fig:Simplecircuit}(a) to the coupling graph in Figure \ref{fig:Simplecircuit}(d). 
A transpiled circuit can be written in the form of $C_0\cdot T_0 \cdot C_1 \cdot T_1 \cdot... \cdot T_{n-1} \cdot C_n$, where
$C_i$ is a partial gate sequence with
$C_0\cdot C_1 \cdot... \cdot C_n=C$, 
and $T_i$ is a SWAP gate that implements mapping transformation. 


\section{QUBIKOS Circuit Construction}
 \label{sec:construction}
To construct a circuit with an optimal SWAP count of $n$ given a coupling graph $G_C$, 
the first step is to generate $n$ circuit segments.
Each requires exactly one SWAP gate under all possible mappings, which means the interaction graph is not isomorphic to any subgraph of $G_C$,
as such isomorphisms would allow execution without SWAP insertion.
Figure~\ref{fig:example2}(c) is a non-isomorphic interaction graph to the coupling graph $G_C$ displayed in Figure~\ref{fig:example2}(a) as the interaction graph contains a node with degree 5, while the maximum degree in $G_C$ is 4.

However, simply concatenating these graphs does not guarantee an optimal SWAP count. 
Thus, the second step is to construct gate dependencies both within and between the interaction graphs and enforce that gates from different interaction graphs execute in serial.
This guarantees that when we concatenate the segments to construct the final circuit, the corresponding optimal SWAP count equals the sum of the optimal SWAP counts for each segment. 

Figure~\ref{fig:exampleDAG} shows a gate dependency graph of a circuit that requires 2 SWAP gates. 
Here, $g_4$ and $g_{15}$ partition the circuit into two sets: $g_0-g_4$ and $g_5-g_{15}$, with no gates from different sets executed in parallel. 
These sets form two non-isomorphic interaction graphs that must execute serially, ensuring each requires one SWAP gate to resolve connectivity,
and no single SWAP can resolve conflicts in both graphs.

Repeating this process $n$ times, we can generate a circuit $C$ with $n$ SWAP gates. 
Finally, we can insert additional gates that do not increase the optimal SWAP count, allowing for arbitrary circuit structures and making it more challenging to solve. 
The following sections provide detailed examples of the subroutines.

\begin{figure}
\centering
\subfigure[]{\includegraphics[width=0.7\linewidth]{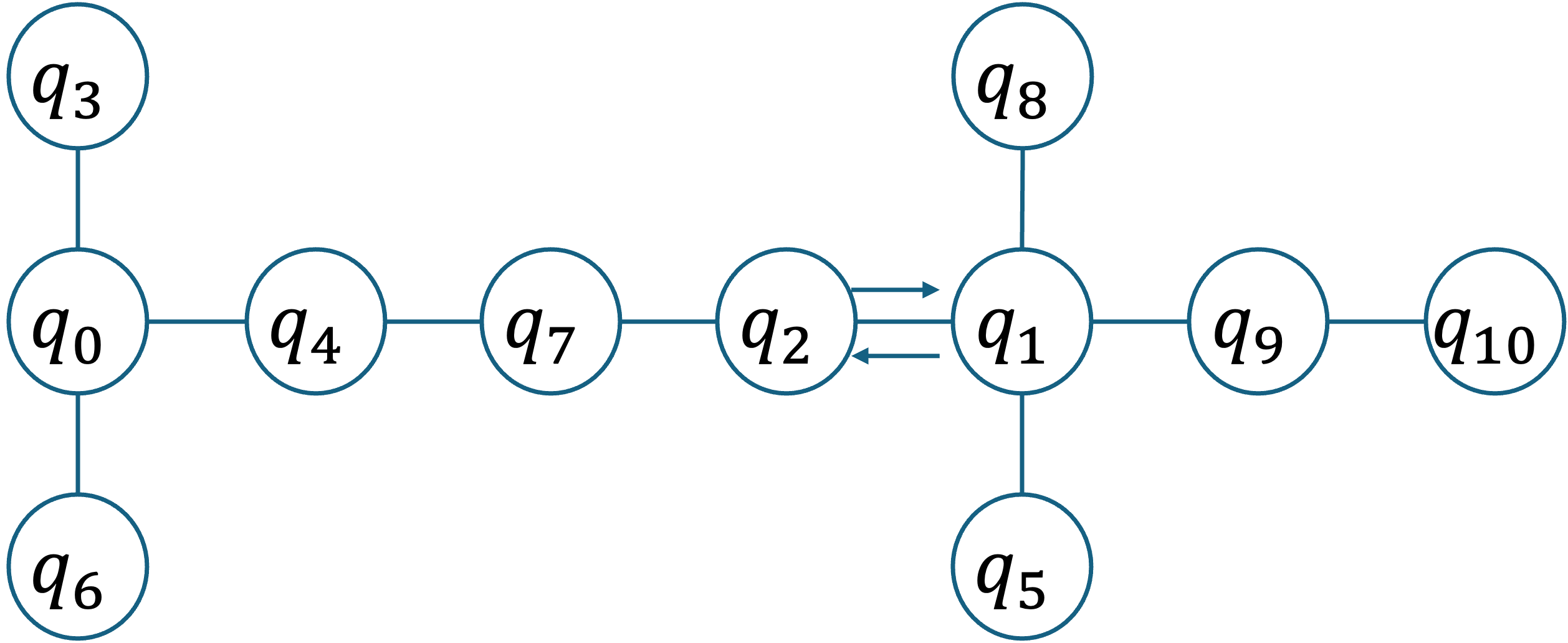}
}
\subfigure[]{\includegraphics[width=0.7\linewidth]{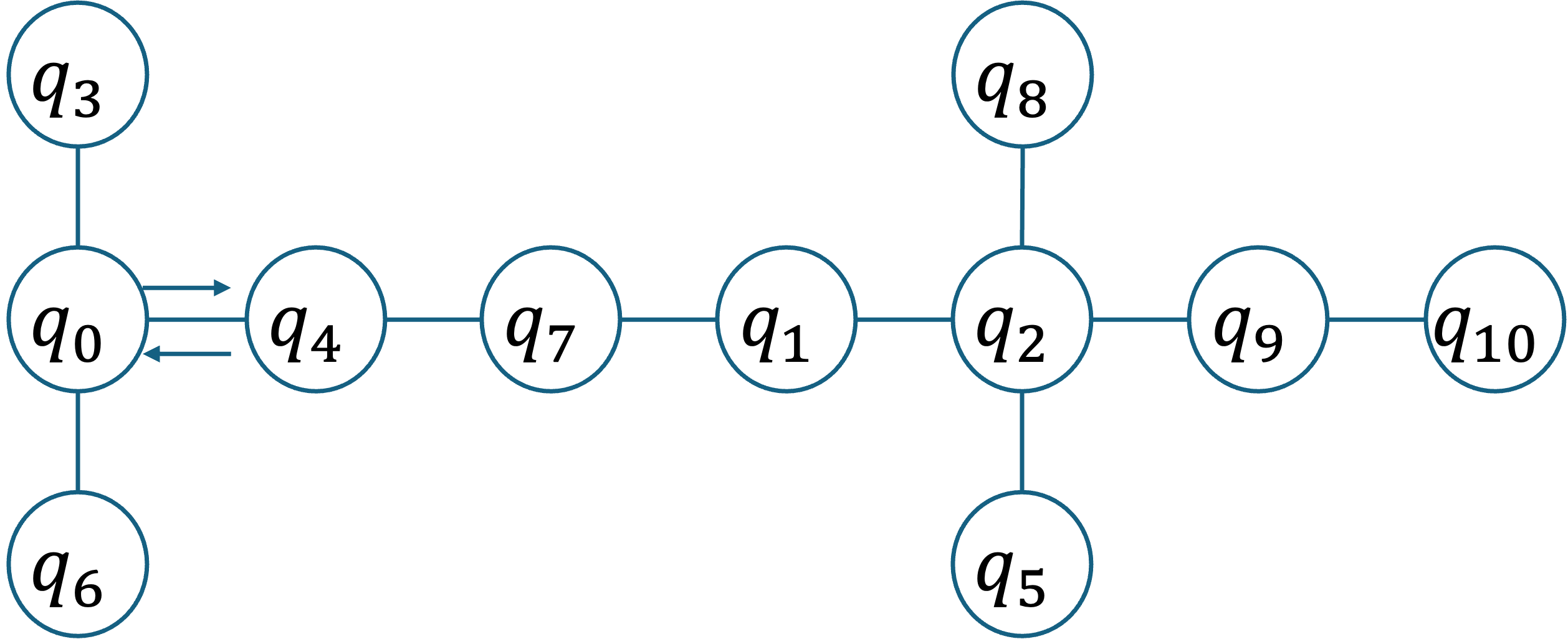}
}

\subfigure[]{
\includegraphics[width=0.26\linewidth]{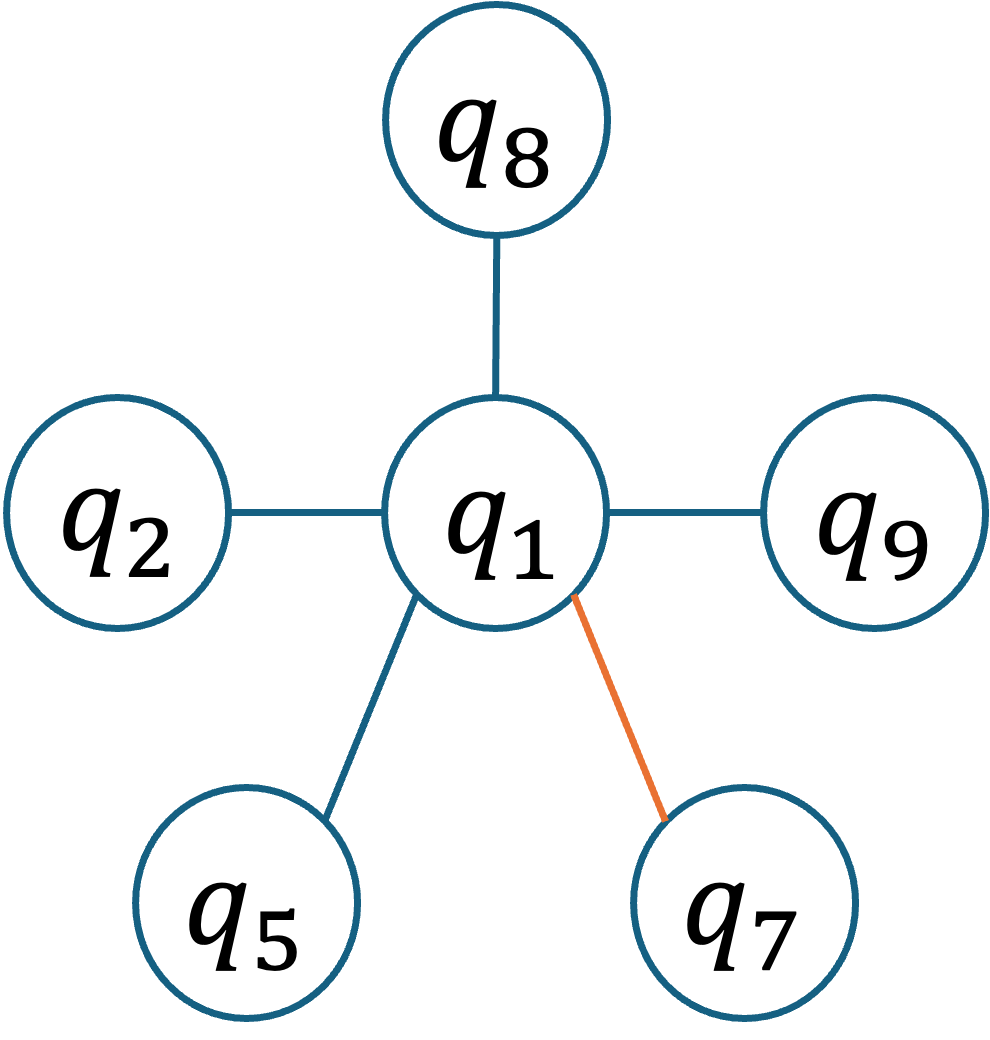}}
\subfigure[]{\includegraphics[width=0.61\linewidth]{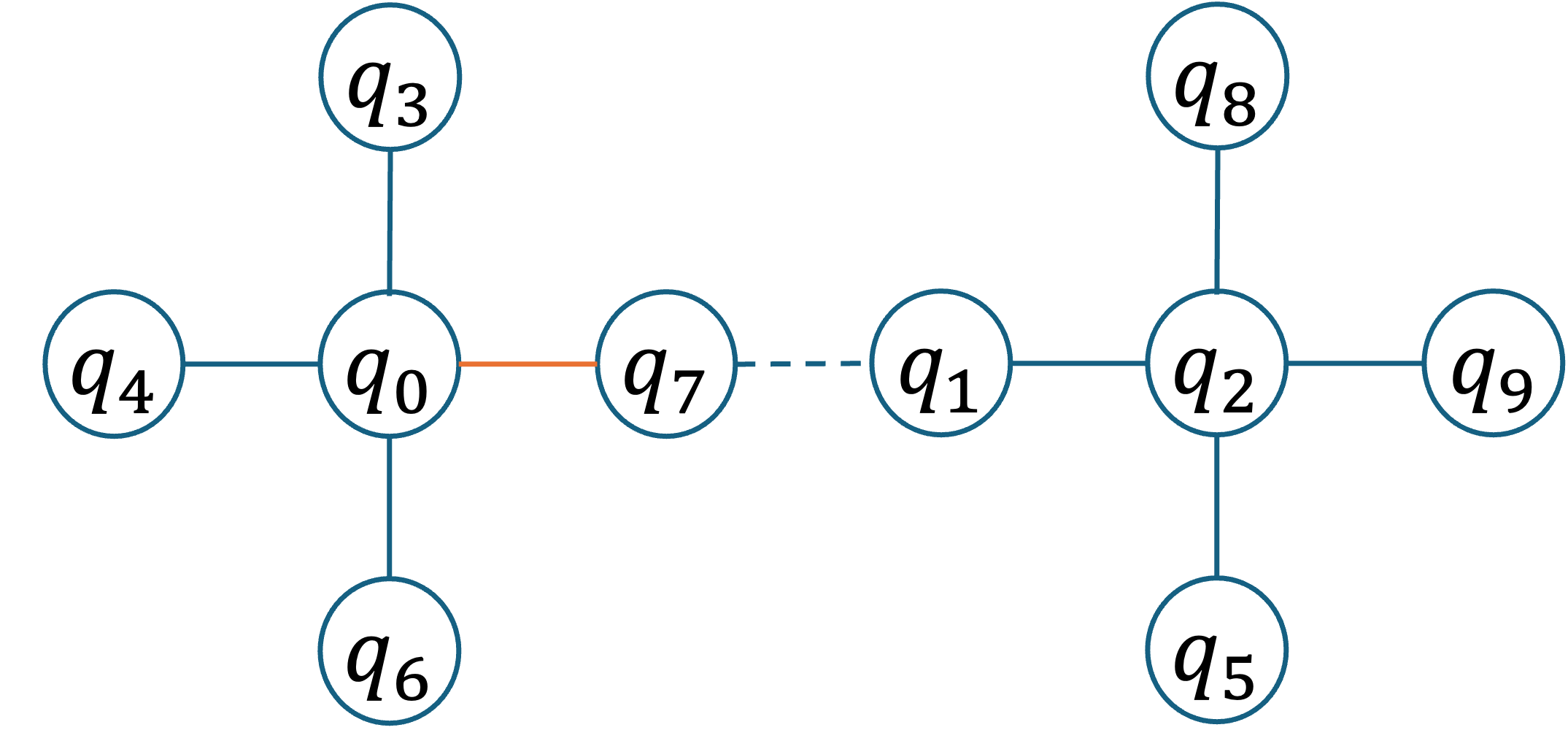}
}

\caption{(a) The device coupling graph $G_C$ with a random initial mapping. Arrows represent the first SWAP operation. (b) The device coupling graph with qubit mapping after SWAP($q_1,q_2$) on the initial mapping. Arrows indicate the second SWAP operation. (c) The interaction graph $G_{I_1}(Q,E_{Q_1})$ that enforces SWAP($q_0,q_4$). The orange edge is the special gate.
(d) The interaction graph $G'_{I_2}(Q,E'_{Q_2})$ that only considers gates after the first SWAP. The orange edge is the special gate in $E_{Q_2}$. The dotted line is the inserted gate. Without it, the coupling graph is $G_{I_2}$.}
\label{fig:example2}
\end{figure}

\subsection{Non-isomorphic Interaction Graph Generation}
\label{subsec:isoigg}
The first step constructs non-isomorphic interaction graphs relative to a given architecture $G_C$. 
Starting with a random initial mapping, we select an adjacent qubit pairs for SWAP gate. 
The SWAP is designed to enable at least one qubit in the pair to interact with a new neighbor, ensuring the SWAP is essential. 
If no new interactions are introduced, the SWAP is redundant.
Such edges always exist unless the coupling graph is fully connected. 
Figure~\ref{fig:example2}(a) illustrates an example of the initial mapping on the coupling graph with SWAP gates on $(q_1, q_2)$, enabling new interactions. For selecting subsequent SWAP gates, we need to transform the mapping based on the selected SWAP and choose an adjacent qubit pair in the new mapping that satisfies the condition discussed above. Figure~\ref{fig:example2} shows the new mapping after SWAP$(q_1,q_2)$, and we choose edge $(q_0,q_4)$ for the next SWAP.

To enforce a SWAP between $q_1$ and $q_2$, we construct an interaction graph based on $q_1$.
First, $q_1$ must interact with all its current neighbors, ensuring it has no free edges to form new interactions.
This requires a two-qubit gate on $g(q_1,q_2)$, $g(q_1,q_5)$, $g(q_1,q_8)$, and $g(q_1,q_9)$.
Next, we identify a qubit that is not a neighbor of $q_1$ before the SWAP gate but becomes one afterward, e.g., $q_7$, and add the \emph{special gate} $g(q_1, q_7)$ in the circuit.
The resulting interaction graph is displayed in Figure~\ref{fig:example2}(c), and is not isomorphic to any subgraph of $G_C$.
Even though the circuit before the special gate does not need a SWAP gate, it forces the circuit to require one prior to the its execution. 
Any gate in the interaction graph could serve as the special gate, provided the remaining four gates form an isomorphic subgraph. 
The special gate must execute last to ensure all other gates in the interaction graph are completed beforehand.
Otherwise, it may alter the interaction graph by removing an edge and eliminate the need for a SWAP gate. 
Section~\ref{subsec:build_dependency} will present how to impose the dependency constraint.

Alternatively, we could use $q_2$ to build the interaction graph.
However, simply having $q_2$ interact with all its neighbors plus one new neighbor qubit is insufficient.
In this case, $q_2$ would have a degree of 3 in the interaction graph, and a mapping could exist that avoids SWAP gates by placing $q_2$ on a degree 3 or degree 4 node in $G_C$.
To prevent this, all program qubits mapped to physical qubits with degrees higher than 2  must also interact with all their neighbors. 
This ensures all higher-degree physical nodes are occupied, making it impossible to find a new mapping that avoids SWAP gates by placing $q_2$ on a higher-degree physical qubit.

Next, we want to construct a circuit $C$ to compel SWAP$(q_0,q_4)$. 
After SWAP$(q_1,q_2)$, the mapping is updated as in Figure~\ref{fig:example2}(b).
Using $q_0$ as the starting point for constructing the non-isomorphic graph, we ensure $q_0$ interacts with all its neighbors, generating gates $g(q_0,q_3)$ $g(q_0,q_6)$, and $g(q_0,q_4)$.
Then, $g(q_0,q_7)$ serves as the special gate that enforce SWAP$(q_0,q_4)$.
In addition, $q_2$ must interact with all its neighbors since $q_2$ is mapped to a physical qubit whose degree is larger than the one that $q_0$ is mapped to. Figure~\ref{fig:example2}(d) shows the interaction graph of $G_{I_2}$. Note that the dotted line is current not part of the interaction graph, but will be added in the next subroutine.



\begin{figure}
\includegraphics[width=0.5\textwidth]{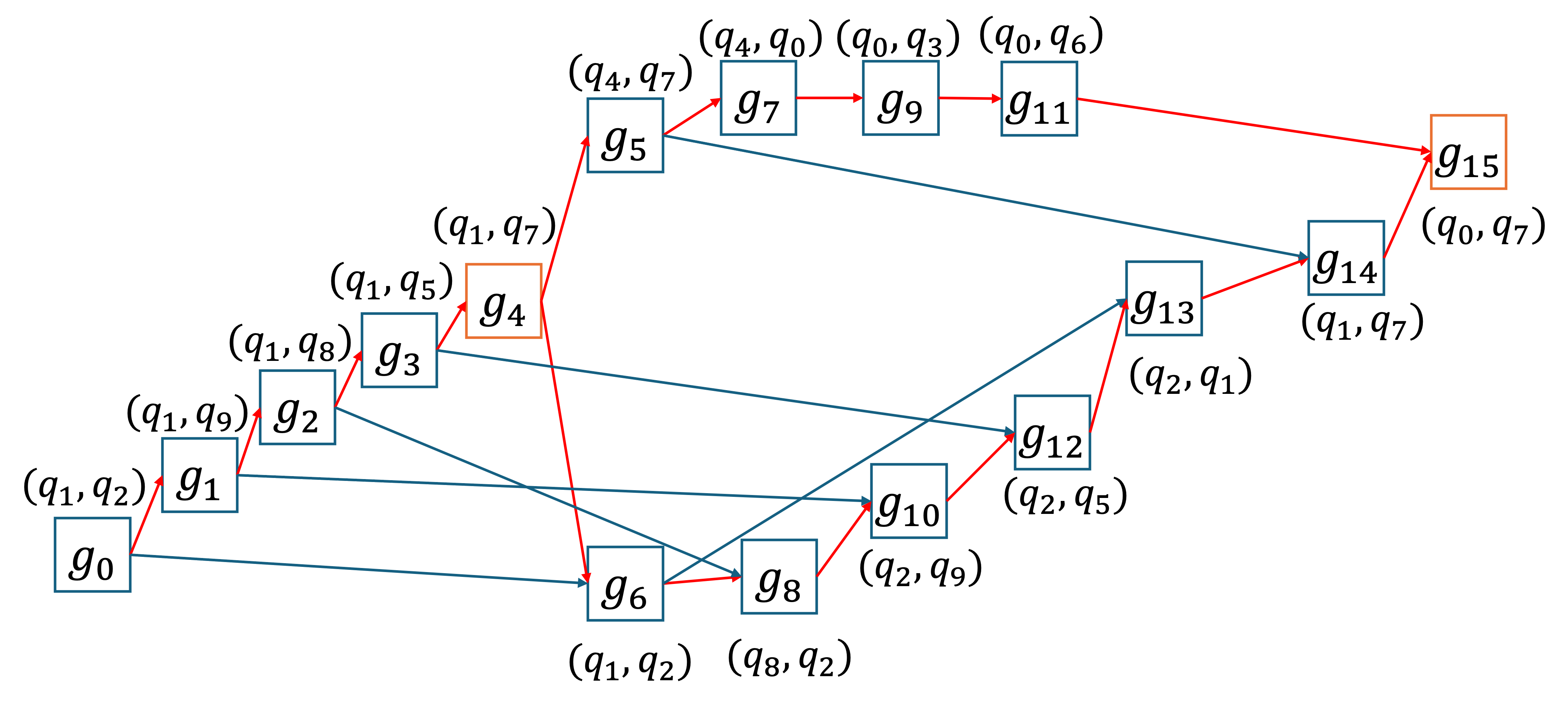}
\caption{Gate dependency graph $D$ for circuit $C$. The orange boxes represent the special gates. The red edges show the required dependence relation for $g_4$ and $g_{15}$. 
}
\label{fig:exampleDAG}
\end{figure}

\subsection{Gate Dependency Relation}
\label{subsec:build_dependency}
Given the interaction graphs $G_{I_1}, G_{I_2}$ generated by the first step, we build the QUBIKOS circuits by enforcing the gate dependency within and between interaction graphs. 
Considering the gates in $G_{I_1}$, 
since all these gates act on $q_0$, we can reorder them so that the special gate depends on all other gates, ensuring that the special gate forces a SWAP operation.
For example, in the gate sequence $C_1=[g_0(q_1,q_2),g_1(q_1,q_9),g_2(q_1,q_8),g_3(q_1,q_5),g_4(q_1,q_7)]$, the special gate $g_4(q_1,q_7)$ has to execute after all other gates. 

Reordering alone does not always ensure the desired gate dependencies. 
For instance, in $G_{I_2}$, the special gate $(q_0, q_7)$ cannot be made to execute after $(q_1, q_2)$ simply by reordering gates in $E_{Q_2}$
because the two disconnected components in $G_{I_2}$ can always execute in parallel regardless of gate order. 
To address this, we must add edges to $G_{I_2}$ to connect the two components.
For example, we can add edge $(q_1, q_7)$ as shown in Figure~\ref{fig:example2}(d), and denote the new interaction graph as $G'_{I_2}$.

A correct gate order necessitates that the first special gate $(q_1, q_7)$ is executed before all gates in $E'_{Q_2}$, and the second special gate $(q_0, q_7)$ execute after the other gates in $E'_{Q_2}$.
To satisfy the first dependency constraint, we perform a breadth-first search (BFS) on $G'_{I_2}$, starting from $q_1$ and $q_7$ while ignoring the edge $(q_0, q_7)$. 
The order of visited edges from the BFS will establish the desired gate dependencies. 
In this case, the BFS result is $[ (q_7,q_1),\allowbreak (q_0,q_4), (q_1,q_2),\allowbreak (q_0,q_3),\allowbreak (q_2,q_8),\allowbreak (q_0,q_6),\allowbreak (q_2,q_9),\allowbreak (q_2,q_5)].$
By prepending $(q_1,q_7)$ to the BFS sequence, all gates will depend on it. 
To satisfy the second constraint, we run BFS on the same graph, starting from $q_0$ and $q_7$, reverse the order of visited edges, and append $(q_0, q_7)$ at the end.
Concatenating these two sequences yields a circuit that meets both dependency constraints. 
While this method does not produce the smallest possible circuit, it is valid. 
For simplicity, we use a smaller sequence as an example.
$C_2 = [g_5(q_4, q_7),\allowbreak g_6(q_1, q_2),\allowbreak g_7(q_4, q_0), g_8(q_8, q_2), g_9(q_0, q_3), \allowbreak
      g_{10}(q_2, q_9),\allowbreak g_{11}(q_0, q_6), \allowbreak g_{12}(q_2, q_5),\allowbreak g_{13}(q_2, q_1),\allowbreak g_{14}(q_1, q_7),\allowbreak g_{15}(q_0, q_7)]$.

Now, we can build a circuit $C$ with an optimal SWAP count two by concatenating $C_1$ and $C_2$, i.e., $C=C_1\cdot C_2$.
The gate dependency graph is illustrated in Figure~\ref{fig:exampleDAG}, showing that gates from two interaction graphs must be executed in serial.
We call $C$ the \emph{backbone} of the benchmark circuit and all gates between two special gates, including the later special gate, a \emph{section} of the backbone circuit that enforce one SWAP gate. 

We can introduce additional gates into the backbone without affecting the optimal SWAP gate count. 
For example, we can add $(q_9,q_{10})$ anywhere in $C$ since it can be executed under any mapping. 
The key constraint is ensuring that additional gates do not introduce extra SWAP gates.
For instance, $(q_2,q_7)$ can only be inserted before $g_4$ since $q_2$ and $q_7$ are not neighbors after the first SWAP gate. 

\subsection{Challenges in Finding Optimal Solution for QUBIKOS}
Unlike QUEKO\cite{tan2020queko}, the optimal transpiled circuits for QUBIKOS benchmarks cannot be obtained using subgraph isomorphism algorithms, e.g., VF2\cite{VF2}.
One may identify the special gates within the benchmark circuit by tracking the degrees of the interaction graph, and the special gates will partition the circuit into multiple sections that have an isomorphic subgraph of $G$.
Nevertheless, running subgraph isomorphism algorithms on individual sections may not yield an optimal solution.
For instance, applying VF2 to the first four gates only provides mapping information for a subset of qubits, and may not be optimal for the subsequent sections.
As the optimal initial mapping for all qubits can only be determined by examining the global structure of the circuit, this benchmark is effective for testing modern QLS tools.
\begin{algorithm}
\caption{Non-isomorphic interaction graph generation}
\label{alg:1}
\begin{algorithmic}[1]
\REQUIRE Coupling graph $G_C$, mapping function $f:Q\rightarrow P$, and an edge $(p,p')$
 a SWAP operation $\mathit{SWAP}(p_1,p_2), p_1,p_2\in P$
\ENSURE A list of edges $S$ and a special gate $g$ 
 forming the interaction graph.
\STATE $p'' \leftarrow $ randomly select a qubit that is a neighbor of $p$ but not a neighbor of $p'$, excluding $p'$
 \STATE $p' \leftarrow $ randomly select a qubit that is not a common neighbor of both $p_1,p_2$ excluding themselves. 
 \STATE $p\leftarrow$ select from $\{p_1,p_2\}$ such that $p' \notin Neighbor(p,G_C)$
\STATE $S \leftarrow [\ ]$
\STATE $g \leftarrow (f^{-1}(p),f^{-1}(p''))$
\FOR{$(p_a,p_b) \in E$} 
\IF{$q\in(p_a, p_b)$ or $\max(\deg(p_a), \deg(p_b))>\deg(p)$}
\STATE S.append($(f^{-1}(p_a),f^{-1}(p_b))$)
 \ELSIF{$\deg(p_a)>\deg(p)$ or $\deg(p_b)>\deg(p)$}
 \STATE S.append($(f^{-1}(p_a),f^{-1}(p_b))$)
\ENDIF
\ENDFOR
\RETURN $S,g$
\end{algorithmic}
\end{algorithm}


\begin{algorithm}
\caption{Backbone circuit generation}
\label{alg:2}
\begin{algorithmic}[1]
\REQUIRE Coupling graph $G_C(P,E_P)$, a mapping function $f$, a set of edge $S$, a prior special gate $g_1$, and a subsequent special gate $g_2$. 
\ENSURE A section of backbone circuit $C_{sub}$.
\STATE $C_{sub}\leftarrow []$
\IF{$g_1$ != Null}
\STATE $G_{sub}\leftarrow \text{findConnectedSubgraph}(G_C,f,[g_1]+S)$
\STATE $q\leftarrow$ randomly select from $\{g_1[0],g_1[1]\}$
\STATE $C_{prior}\leftarrow \mathit{BFS}(G_{sub},q)$
\STATE $C_{sub}\leftarrow C_{sub}+C_{prior}+S$
\ENDIF
\STATE $G'_{subI}\leftarrow \text{findConnectedSubgraph}(G_C,f,S+[g_2])$
\STATE $q'\leftarrow$ randomly select from $\{g_2[0],g_2[1]\}$ 
\STATE $C_{later}\leftarrow BFS(G'_{sub},q').reverse()$
\STATE $C_{sub}\leftarrow C_{sub}+C_{later}+[g_2]$
\RETURN $C_{sub}$
\end{algorithmic}
\end{algorithm}


 \begin{algorithm}
 \caption{Circuit generation}
 \label{alg:3}
 \begin{algorithmic}[1]
 \REQUIRE Coupling graph $G(V,E)$, number of SWAP gate $n$, total number of two-qubit gates $N$.
 \ENSURE A circuit $C$ and a transpiled circuit $C_{ans}$.

 \STATE $f_\mathit{init} \leftarrow $ randomly generate $f:V'\rightarrow V$
 \STATE $f_{cur}\leftarrow f_\mathit{init}$
 \STATE Generate a sequence of edges $S$ with length $n$
 \STATE $g_{prev},C,C_{ans}\leftarrow Null,\{\},\{\}$
 \FOR{$(q,q') \in S$}
 \STATE $S,g_{cur}\leftarrow \text{NonIsomorphic}(G,f_{cur},(q,q'))$ 
 \STATE $C_{sub}\leftarrow \text{BuildSection}(G,S,f_{cur},g_{prev},g_{cur})$
 
 \STATE $C\leftarrow C+C_{sub}$
 \STATE $C_{ans}\leftarrow C_{ans}+C_{sub}[:-1]+\{\mathit{SWAP}(q,q'),C_{sub}[-1]\}$
 \STATE $g_{prev}\leftarrow g_{cur}$
 \STATE Update $f_{cur}$ by $s$
 \ENDFOR
 \WHILE{$C.size < N$}
 \STATE Insert a gate into $C$ and $C_\mathit{ans}$
 \ENDWHILE
\RETURN $C,C_{ans}$
 \end{algorithmic}
 \end{algorithm}



\subsection{Proof of Optimality}
In this section, we prove the backbone circuit $C$ constructed by Algorithm~\ref{alg:3} requires at least $n$ SWAP gates when executed on the coupling graph $G_{C}$, and there exists a solution with $n$ SWAP gates.
Then, we show inserting redundant gates does not affect the optimal SWAP count.

Since Algorithm \ref{alg:1} will be called $n$ times, let $L_{special}=[g'_0,g'_1,...,g'_{n-1}]$ be a list of special gates generated from the first to the last call. 
 

\begin{lemma}
\label{th:1}
 $S\cup \{g\}$ generated by algorithm \ref{alg:1} forms an interaction graph that is not isomorphic to any subgraph of $G_C(P,E_P)$.
\end{lemma}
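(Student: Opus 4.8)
The plan is to show that the interaction graph $G_I = (Q, S \cup \{g\})$ produced by Algorithm~\ref{alg:1} cannot be embedded as a subgraph of $G_C$, by exhibiting a structural invariant that is violated. The natural candidate is a degree/counting argument: let $p = f(q)$ be the physical qubit on which the ``base'' program qubit sits (the one chosen in lines 2--3 so that the new neighbor $p'$ is not adjacent to it). The construction forces $q$ to interact with all of $\mathit{Neighbor}(p, G_C)$ \emph{plus} one extra program qubit (the endpoint of the special gate $g$, which is a post-SWAP neighbor but not a pre-SWAP neighbor of $p$). Hence $\deg_{G_I}(q) = \deg_{G_C}(p) + 1$. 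If $p$ happened to be a maximum-degree vertex of $G_C$, we would already be done, since no vertex of $G_C$ has degree $\deg_{G_C}(p)+1$. The subtlety is that $p$ need not be maximum-degree, which is exactly why Algorithm~\ref{alg:1} (lines 7--11) additionally forces \emph{every} program qubit mapped to a physical qubit of degree $> \deg(p)$ to interact with all of its neighbors.

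So the core of the argument is: suppose for contradiction there is a subgraph embedding $h : Q \to P$ of $G_I$ into $G_C$. First I would argue that $h(q)$ must be a vertex of $G_C$ with degree $\ge \deg_{G_C}(p) + 1 > \deg_{G_C}(p)$, so $h(q)$ lies in the set $H = \{p'' \in P : \deg_{G_C}(p'') > \deg_{G_C}(p)\}$ of ``high-degree'' physical qubits. Next, by construction every program qubit $q'$ with $f(q') \in H$ is (in $G_I$) adjacent to all of $\mathit{Neighbor}(f(q'), G_C)$; I would push this through a counting/injectivity argument to conclude that $h$ must map the set $\{q' : f(q') \in H\}$ onto $H$ (an embedding of a graph on $|H|$ high-degree vertices, each with full neighborhood, into the $|H|$ high-degree slots of $G_C$, forces surjectivity onto $H$). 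But then there is no vertex left in $H$ to receive $q$, contradicting $h(q) \in H$. The clean way to package this is: count edges incident to $H$ under the embedding, or simply observe that the subgraph of $G_I$ induced on $\{q\} \cup \{q' : f(q') \in H\}$ already requires $|H| + 1$ distinct vertices all of degree $> \deg_{G_C}(p)$, while $G_C$ supplies only $|H|$ such vertices.

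The steps in order: (1) read off from lines 5--11 of Algorithm~\ref{alg:1} the precise edge set $S \cup \{g\}$, in particular that $q = f^{-1}(p)$ is joined to $f^{-1}(p'')$ with $p'' \notin \mathit{Neighbor}(p,G_C)$, and to $f^{-1}(p_a)$ for every $p_a \in \mathit{Neighbor}(p,G_C)$, giving $\deg_{G_I}(q) = \deg_{G_C}(p)+1$; (2) observe similarly that each $q' = f^{-1}(p_a)$ with $\deg_{G_C}(p_a) > \deg_{G_C}(p)$ satisfies $\deg_{G_I}(q') \ge \deg_{G_C}(p_a) > \deg_{G_C}(p)$, because the loop appends all of its incident coupling edges; (3) assume an embedding $h$ exists and use the degree bound in (1) to force $h(q) \in H$; (4) use (2) plus injectivity of $h$ to force $h$ to already occupy all of $H$ with the images of $\{q': f(q')\in H\}$; (5) derive the contradiction that $q$ has nowhere to go.

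The main obstacle I anticipate is step (4) --- making rigorous the claim that the high-degree program qubits must map \emph{exactly onto} the high-degree physical qubits. A subgraph embedding need not send a degree-$d$ vertex to a degree-exactly-$d$ vertex; it only needs $\deg_{G_C}(h(v)) \ge \deg_{G_I}(v)$. So I cannot naively say ``$h$ permutes $H$.'' Instead I would argue by a pigeonhole / cardinality count: the program qubits $q'$ with $f(q') \in H$ each have $G_I$-degree strictly greater than $\deg_{G_C}(p)$, hence each must map into $H$; together with $q$ (also mapping into $H$ by step 3) that is $|H|+1$ program qubits all forced into the $|H|$-element set $H$, violating injectivity of $h$. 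This avoids the ``degree-preserving'' pitfall entirely. A secondary point to be careful about is edge cases --- when $G_C$ is vertex-transitive or regular, $H$ may be empty, in which case $p$ is already maximum-degree and step (1) alone gives the contradiction; I would flag that as the base case and note the construction's assumption (line 3, existence of $p'$) guarantees $G_C$ is not complete so the whole scheme is non-vacuous.
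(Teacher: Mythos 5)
Your proposal is correct and takes essentially the same route as the paper: a degree-threshold pigeonhole argument showing that the program qubits forced to have interaction-graph degree exceeding $\deg(p)$ outnumber the physical qubits of such degree, so no injective embedding into $G_C$ exists. In fact your write-up is the more careful version of it --- you state the counting inequality in the correct direction (the paper's proof writes $|S_1|>|S_2|$ where the argument needs the high-degree \emph{program} qubits to outnumber the high-degree \emph{physical} qubits), you avoid the ``degree-preserving embedding'' pitfall explicitly, and you handle the case where the base qubit already sits on a maximum-degree vertex.
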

\begin{proof}
Let $G_I(Q,E_Q)$ be the interaction graph of $S\cup \{g
\}$. For $q\in Q$, we define $S_1=\{p |\deg(p)\geq \deg(q),p \in P\}$ and $S_2=\{q' |\deg(q')\geq \deg(q), q' \in Q \}$. 
We know $\exists q \in \{g[0],g[1]\}$ such that $|S_1| > |S_2|$. 
This means we need to map at least one program qubit in $Q$ to a physical qubit $P$ with less degree no matter what mapping we choose. 
Thus, $G_I$ is not isomorphic to any subgraph of $G_C$.
\end{proof}

\begin{lemma}
\label{th:2}
 All gates in circuit $C_{sub}$ generated by Algorithm \ref{alg:2} has to execute after the prior special gate $g_1$ but before the subsequent special gate $g_2$.
\end{lemma}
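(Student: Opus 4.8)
The statement claims that in the section $C_{sub}$ produced by Algorithm~\ref{alg:2}, every gate is forced (by the dependency graph $D$) to execute after the prior special gate $g_1$ and before the subsequent special gate $g_2$. The natural approach is to argue separately about the two halves of $C_{sub}$ that Algorithm~\ref{alg:2} concatenates: the "prior" half $C_{prior}+S$ built by a forward BFS on $G_{sub}=\text{findConnectedSubgraph}(G_C,f,[g_1]+S)$ rooted at an endpoint $q$ of $g_1$, and the "later" half $C_{later}+[g_2]$ built by a reversed BFS on $G'_{sub}=\text{findConnectedSubgraph}(G_C,f,S+[g_2])$ rooted at an endpoint $q'$ of $g_2$. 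The key structural fact I would invoke is that findConnectedSubgraph augments the edge list with extra edges so that the resulting interaction graph is connected; hence there is a path in the interaction graph from the root qubit to \emph{every} other qubit that appears in the section.

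First I would handle the "executes after $g_1$" claim. Since $g_1$ is prepended to $S$ before BFS, and BFS visits edges in order of increasing distance from the root $q\in\{g_1[0],g_1[1]\}$, every edge $e$ of $G_{sub}$ lies on a BFS tree path starting at $q$; I would show by induction along this path that each gate corresponding to an edge on the path shares a qubit with the previously visited gate, so consecutive gates in $C_{prior}$ are dependency-ordered, and the very first gate in $C_{prior}$ shares qubit $q$ with $g_1$. Therefore in $D$ there is a directed path from $g_1$ to every gate of $C_{prior}$, and since the edges of $S$ are appended after $C_{prior}$ and (by construction in Section~\ref{subsec:isoigg}) all share the common qubit on which the section's interaction graph is centered, those too lie downstream of $g_1$. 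The "executes before $g_2$" claim is the mirror image: the reversed BFS on $G'_{sub}$ rooted at $q'\in\{g_2[0],g_2[1]\}$ puts $g_2$ last, with a directed path in $D$ from every gate of $C_{later}$ into $g_2$ by the same path-induction; and the gates of $S$, being upstream of $C_{later}$ in the concatenation and sharing qubits along the BFS tree of $G'_{sub}$, precede $g_2$ as well. Chaining the two halves through the shared edge set $S$ closes the argument that \emph{all} of $C_{sub}$ lies strictly between $g_1$ and $g_2$ in $D$.

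The main obstacle I anticipate is making the "path in the interaction graph induces a chain of dependencies in $D$" step fully rigorous: BFS visits \emph{edges} (gates), not just vertices, and a BFS tree path from the root to a far vertex need not correspond to a contiguous run of gates in the emitted ordering, because siblings at the same BFS level are interleaved. The careful point is that it suffices for each gate to be reachable in $D$ from $g_1$ (resp. to reach $g_2$), not that the emitted order be exactly a topological path; so I would phrase the induction as: "for every vertex $v$ in the BFS tree at depth $d$, the tree edge entering $v$ is emitted after all tree edges on the root-to-$v$ path, hence is a descendant of $g_1$ in $D$," and then observe that any non-tree edge incident to an already-reached vertex is emitted later still. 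A secondary subtlety is verifying that the extra edges added by findConnectedSubgraph to enforce connectivity are themselves legitimate circuit gates that do not violate the optimality construction — but this is exactly the "inserted gate" discussion around Figure~\ref{fig:example2}(d) and can be cited rather than reproven.
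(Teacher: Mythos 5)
Your proposal takes essentially the same route as the paper's proof: the forward BFS rooted at an endpoint of $g_1$ yields, via shared qubits along BFS-visited edges emitted in BFS order, a directed path in $D$ from $g_1$ to every gate of the section, and the reversed BFS rooted at an endpoint of $g_2$ symmetrically yields a path in $D$ from every gate to $g_2$. Your extra care in turning a BFS tree path into a dependency chain (handling interleaved sibling edges, and noting that reachability in $D$ rather than contiguity in the emitted order is what is needed) simply makes explicit what the paper asserts tersely, so the approaches coincide.
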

\begin{proof}
Let $G_I(Q,E_Q)$ be the interaction graph of $C_\mathit{sub}$. 
By construction,  $\forall g \in C_\mathit{sub}$, $\exists q\in \{g[0],g[1]\},q'\in \{g_1[0],g_1[1]\}$ such that BFS finds a path from $q'$ to $q$. Since the order of the gates is the same as the edges visited by BFS, we will have a path from $g_1$ to $g$ in DAG by finding the corresponding gate label of the edges visited by BFS. Hence, $g_1$ must execute before $g$.

For the second property, suppose $g \in C_{sub}$ and $g\neq g_2$. By construction, we have $\exists q\in \{g[0],g[1]\},q'\in \{g_1[0],g_1[1]\}$ such that BFS finds a path from $q'$ to $q$. But the order of the gate is reversed after BFS, so we can find a path from $g$ to $g_2$ in DAG. Hence, $g_2$ must execute after $g$.
\end{proof}


\begin{lemma}
\label{th:3}
$\forall 0\leq i \leq n-1, \forall g\in C_i,\forall g'\in C_{i+1}: g\in Prev(g')$
\end{lemma}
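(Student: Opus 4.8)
The plan is to show that the construction of the backbone circuit $C = C_0 \cdot C_1 \cdots C_{n-1}$ in Algorithm~\ref{alg:3} places every gate of section $C_{i+1}$ strictly after every gate of section $C_i$ in the dependency DAG $D$. The natural hinge is the special gate $g'_i$, which, by the labelling convention, is the \emph{last} gate of $C_i$ (it is returned as $C_{sub}[-1]$ in Algorithm~\ref{alg:2}) and also serves as the \emph{prior} special gate $g_1$ when Algorithm~\ref{alg:2} builds section $C_{i+1}$. So the argument factors through $g'_i$: first, every $g \in C_i$ lies in $\mathit{Prev}(g'_i)$; second, $g'_i \in \mathit{Prev}(g')$ for every $g' \in C_{i+1}$; then transitivity of "there exists a path in $D$" closes the claim.

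For the first half, I would invoke Lemma~\ref{th:2} applied to section $C_i$: with $g_2 = g'_i$ the subsequent special gate of that section, the lemma says every $g \in C_i$ with $g \neq g'_i$ must execute before $g'_i$, i.e.\ there is a path $g \to g'_i$ in $D$, which is exactly $g \in \mathit{Prev}(g'_i)$. The boundary case $g = g'_i$ needs a separate word: $g'_i$ is itself the first gate of no section, but we do need $g'_i \in \mathit{Prev}(g')$ for $g' \in C_{i+1}$ — this is the second half. Here I apply Lemma~\ref{th:2} again, now to section $C_{i+1}$, whose \emph{prior} special gate is $g_1 = g'_i$: the lemma's first property gives that every $g' \in C_{i+1}$ executes after $g'_i$, i.e.\ there is a path $g'_i \to g'$ in $D$, so $g'_i \in \mathit{Prev}(g')$. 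Composing the two paths $g \to g'_i \to g'$ yields a path $g \to g'$ in $D$, hence $g \in \mathit{Prev}(g')$, for all $g \in C_i$ (including $g = g'_i$, trivially) and all $g' \in C_{i+1}$. An induction on $i$ is not really needed since each $i$ is handled identically; a single clean application per index suffices, though one should state the base indexing ($C_0$ has $g_1 = \text{Null}$, but that section is never the "later" one in this lemma, so it causes no trouble).

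The step I expect to require the most care is the bookkeeping that the special gate $g'_i$ genuinely plays the double role claimed — last gate of $C_i$ and the argument $g_1$ passed into the generation of $C_{i+1}$ — because this is a fact about Algorithm~\ref{alg:3}'s loop (the assignment $g_{prev} \leftarrow g_{cur}$ and the call $\text{BuildSection}(G,S,f_{cur},g_{prev},g_{cur})$) rather than about a single invocation of Algorithm~\ref{alg:2}, and Lemma~\ref{th:2} is stated only about one invocation. So the proof must first pin down, from the pseudocode, that the output $C_i$ is precisely the $C_{sub}$ returned on iteration $i$ with $g_1 = g'_{i-1}$ and $g_2 = g'_i$, and then feed that into Lemma~\ref{th:2} twice. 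Once that correspondence is made explicit, the rest is just path concatenation and poses no obstacle.
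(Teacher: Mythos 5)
Your proof is correct and follows essentially the same route as the paper's: both factor the path through the boundary special gate, applying Lemma~\ref{th:2} once to section $C_i$ (special gate as $g_2$) and once to section $C_{i+1}$ (same gate as $g_1$), then concatenating paths in $D$. The only difference is bookkeeping—the paper's proof treats the hinge special gate as lying in $C_{i+1}$ and case-splits on $g'$ equaling it, whereas you keep it as the last gate of $C_i$ and handle that boundary case explicitly, which is if anything slightly more careful and consistent with Algorithm~\ref{alg:2}'s output.
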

\begin{proof}
First, we know $\exists g_{s_i}\in S_{special}:g_{s_i}\in C_{i+1}$. Suppose $g\in C_i$ and $g'\in C_{i+1}$. If $g'$ is $g_{s_i}$, then by Lemma \ref{th:2}, we have $g\in Prev(g')$. Otherwise, by Lemma \ref{th:2}, we know there exists a path from $g$ to $g_{s_i}$ and a path from $g_{s_i}$ to $g'$ in DAG. Hence, there exists a path from $g$ to $g'$. Hence, $g\in Prev(g')$. 
\end{proof}
\begin{theorem}
\label{th:4}
 The backbone of circuit $C$ needs at least $n $ SWAP gates to execute.
\end{theorem}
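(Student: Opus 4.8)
The plan is to charge at least one inserted SWAP to each of the $n$ sections $C_0,\dots,C_{n-1}$ of the backbone, and to use the serial ordering between sections (Lemma~\ref{th:3}) to argue that no SWAP can be charged to two different sections, so the total is at least $n$. Lemma~\ref{th:1} supplies the reason a section needs a SWAP at all (its interaction graph embeds nowhere in $G_C$), while Lemma~\ref{th:3} supplies the rigidity that prevents a single SWAP from serving two sections at once.

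First I would fix an arbitrary valid transpiled circuit $\Gamma$ of the backbone $C$ on $G_C$ and consider the order in which its operations (original two-qubit gates and inserted SWAPs) are executed; the subsequence of original gates is a linear extension of the dependency DAG $D$. By Lemma~\ref{th:3}, every gate of $C_i$ precedes every gate of $C_{i+1}$ in $D$, hence in this order, so the original gates break into $n$ consecutive blocks, the $i$-th block being exactly the gate set of $C_i$. Let $e_i$ be the position in $\Gamma$ of the last gate of $C_i$, so $e_0<e_1<\dots<e_{n-1}$; put $e_{-1}=0$ and let $B_i$ be the stretch of $\Gamma$ occupying positions $e_{i-1}+1,\dots,e_i$. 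The $B_i$ are pairwise disjoint and cover $\Gamma$ up through the last original gate (any trailing SWAPs only help), and by the block structure every original operation in $B_i$ belongs to $C_i$ while all gates of $C_i$ lie in $B_i$.

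The core step is to show each $B_i$ contains at least one SWAP. Suppose some $B_i$ contains none. Then the qubit mapping is a single fixed function $f_i$ throughout $B_i$, so every gate $g(q,q')$ of $C_i$ is executed with $f_i(q)$ adjacent to $f_i(q')$ in $G_C$; equivalently $f_i$ embeds the interaction graph of $C_i$ as a subgraph of $G_C$. But that interaction graph contains, as a subgraph, the graph on $Q$ with edge set $S_i\cup\{g'_i\}$ output by Algorithm~\ref{alg:1} for this section, which by Lemma~\ref{th:1} is not isomorphic to any subgraph of $G_C$ — a contradiction. Hence each $B_i$ contains a SWAP, and since the $B_i$ are disjoint, $\Gamma$ uses at least $n$ SWAPs; as $\Gamma$ was arbitrary, the backbone needs at least $n$.

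I expect the main obstacle to be purely in the bookkeeping: making precise that the $B_i$ are genuinely disjoint and jointly exhaustive, so that no SWAP is double-counted and none is overlooked — this is exactly where Lemma~\ref{th:3} is indispensable, since without the serial ordering of sections a single SWAP could sit between interleaved gates of $C_i$ and $C_{i+1}$ and discharge both obligations. A secondary point I would state carefully is that the connector edges added in Section~\ref{subsec:build_dependency} (and any redundant gates touching a section's qubits) only enlarge a section's interaction graph, so the non-embeddability guaranteed by Lemma~\ref{th:1} for $S_i\cup\{g'_i\}$ is inherited by the full interaction graph of $C_i$, keeping the argument intact.
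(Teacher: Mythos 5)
Your proposal is correct, and it reaches the bound by a genuinely different formalization than the paper. The paper argues by contradiction and induction: assuming a transpiled circuit $C'=C'_0\cdot T'_0\cdots T'_{k-1}\cdot C'_k$ with $k<n$ SWAPs, it establishes the containment invariant $\bigcup_{j\le i}C'_j\subseteq\bigcup_{j\le i}C_j$ segment by segment, using Lemma~\ref{th:2} and Lemma~\ref{th:3} to force the $i$-th special gate into $C'_i$ and Lemma~\ref{th:1} to rule out executing $C_i\cup\{g_{s_i}\}$ under a single mapping, concluding that some gate of $C$ is never executed. You instead fix an arbitrary transpiled circuit, use Lemma~\ref{th:3} to show the original gates occur in $n$ serialized section blocks, partition the execution timeline at the completion position of each section, and charge one SWAP to each of the $n$ disjoint blocks via the non-embeddability of Lemma~\ref{th:1}; no induction, no global contradiction, and Lemma~\ref{th:2} enters only implicitly through Lemma~\ref{th:3}. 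Your route is shorter and cleaner on the bookkeeping (it sidesteps the paper's segment-indexing subtleties and never needs to locate the special gate inside a hypothetical segment), at the cost of having to verify the block structure carefully, which you do; the paper's containment induction, in exchange, tracks exactly which gates can appear between consecutive SWAPs, mirroring the structure of the intended $n$-SWAP solution. Your two flagged caveats are exactly the right ones and both hold by construction: each section's gate list contains all of $S_i\cup\{g'_i\}$ (Algorithm~\ref{alg:2} appends $S$ and the special gate explicitly), so connector and redundant gates only enlarge the section's interaction graph and preserve non-embeddability, and the transpiled circuit preserves the original gate order, so the execution order is indeed a linear extension of $D$.
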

\begin{proof}
Assume the contrary that there exists a transpiled circuit of $C$ with $k<n$ SWAP gates. We denote such transpiled circuit as $C'=C'_0\cup T'_0 \cup ... \cup T'_{k-1} \cup C'_{k}$ with $\forall 0\leq i \leq k-1:|T'_i|=1$. Now, we will show $\bigcup_{i=0}^{k} C'_i\subseteq \bigcup_{i=0}^{k} C_i$ \\
Base case: $C'_0\subseteq C_0$.\\
Assume the contrary that $C'_0\not\subseteq C_0$. Then, $\exists g\in C'_0: g\in \bigcup_{i=1}^n C_i$. By Lemma 2 and 3, we know $g_{s_0}\in C'_0$. Furthermore, by claim 2, we have $\forall g \in C_0: g\in C'_0$. Hence, $C_0\cup {g_{s_0}}\subseteq C'_0$. However, by Lemma 1, $C_0\cup {g_{s_0}}$ is not isomorphic to any subgraph of $G$. Hence, $C'_0$ can't be executed under the same mapping. This is a contradiction.\\
Inductive step: \\
Assume $\bigcup_{j=0}^i C'_j \subseteq \bigcup_{j=0}^i C_j$. We want to show $\bigcup_{j=0}^{i+1} C'_j \subseteq \bigcup_{j=0}^{i+1} C_j$. \\
Again assume the contrary that $\bigcup_{j=0}^{i+1} C'_j \not \subseteq \bigcup_{j=0}^{i+1} C_j$. Then, $(\bigcup_{j=0}^{i} C'_j)\cup C'_{i+1} \not \subseteq \bigcup_{j=0}^{i+1} C_j$. Since $\bigcup_{j=0}^i C'_j \subseteq \bigcup_{j=0}^i C_j$, $\exists g\in C'_{i+1}: g\in \bigcup_{j=i+2}^n C_j$. By Lemma 2 and 3, we have $g_{s_{i+1}}\in C'_{i+1}$. Also by Lemma 2, we have $C_{i+1}\subseteq C'_{i+1}$. Thus, $C_{i+1}\cup \{g_{s_{i+1}}\}\subseteq C'_{i+1}$. By Lemma 1, this $C'_{i+1}$ is not isomorphic to any subgraph of $G$, which is a contradiction. \\
Since $k<n$ and we prove by induction that $\bigcup_{i=0}^{k} C'_i\subseteq \bigcup_{i=0}^{k} C_i$, we know $\bigcup_{i=0}^{k} C'_i \neq C$. This means not all gates in $C$ are executed in $C'$. Hence, there doesn't exist a transpiled circuit with $k<n$ SWAP gates. 
\end{proof}
 Since we need at least $n$ SWAP gates to execute $C$ and we know a solution with $n$ SWAP, the SWAP count is optimal. Lastly, since the final circuit is obtained by inserting more redundant gates to the backbone, the non-isomorphic and dependency properties still hold after insertion because adding more edges to the interaction graph will keep it non-isomorphic, and the dependency relation won't break by inserting more gates into the sequence since it can only make the dependency chain either longer or remain the same. We still need at least $n$ SWAP gates to execute $C$, and hence  $C_{ans}$ is optimal since it requires $n$ SWAP gates.

\section{Evaluation}
\label{sec:evaluation}

\begin{figure}[!htbp]
\centering
\subfigure[]{
\includegraphics[width=0.9\linewidth]{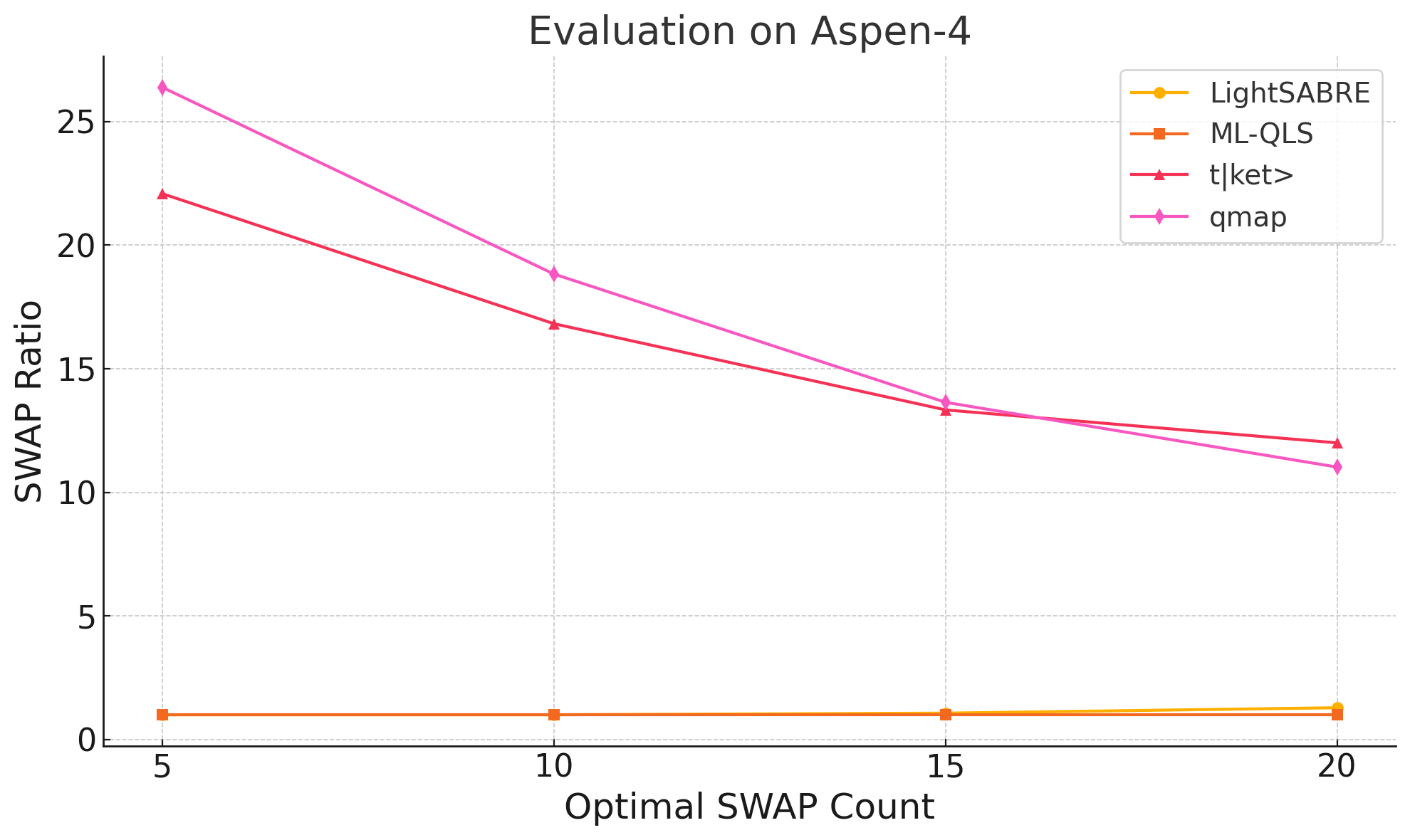}
}
\hfill
\subfigure[]{
\includegraphics[width=0.9\linewidth]
{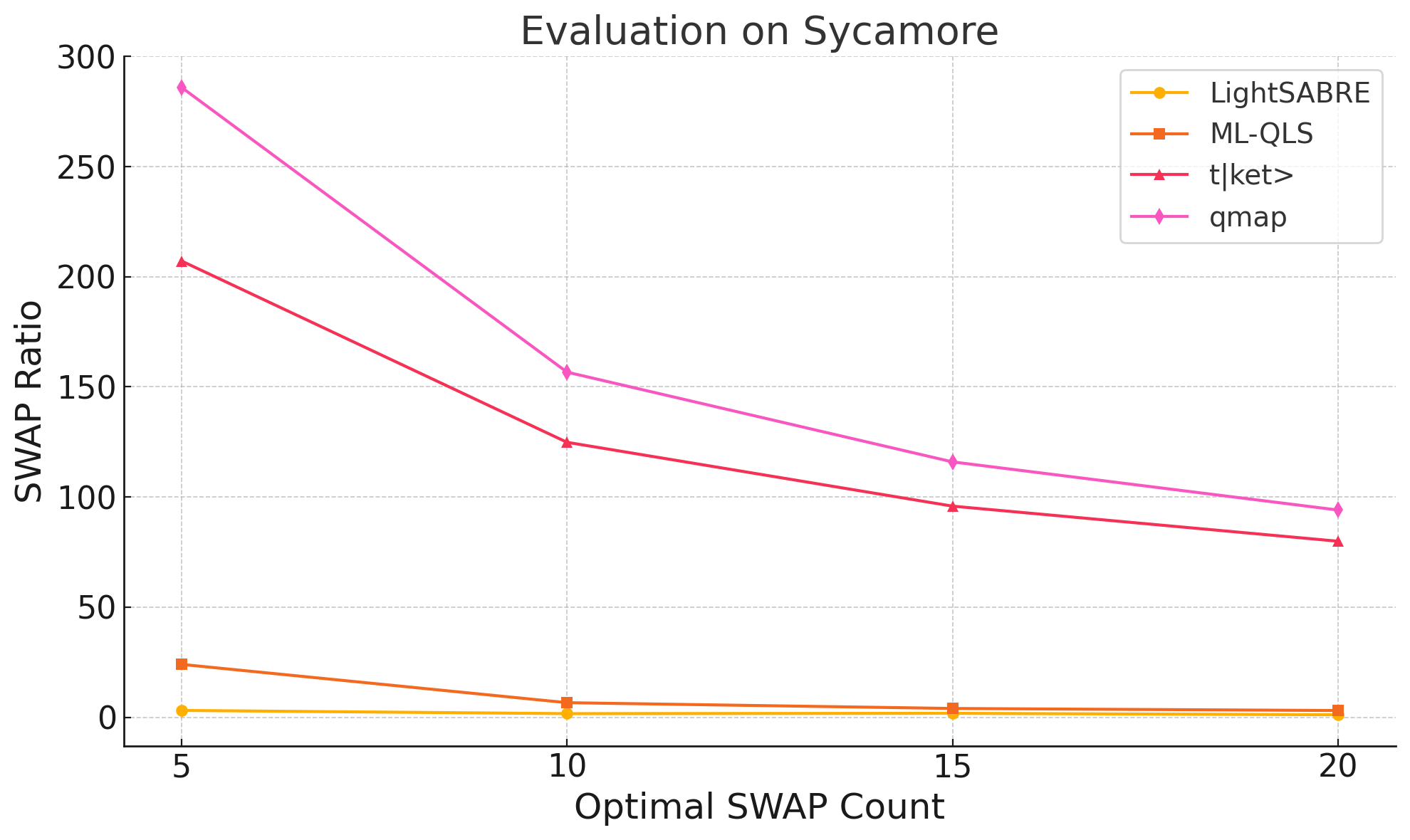}
}
\subfigure[]{
\includegraphics[width=0.9\linewidth]
{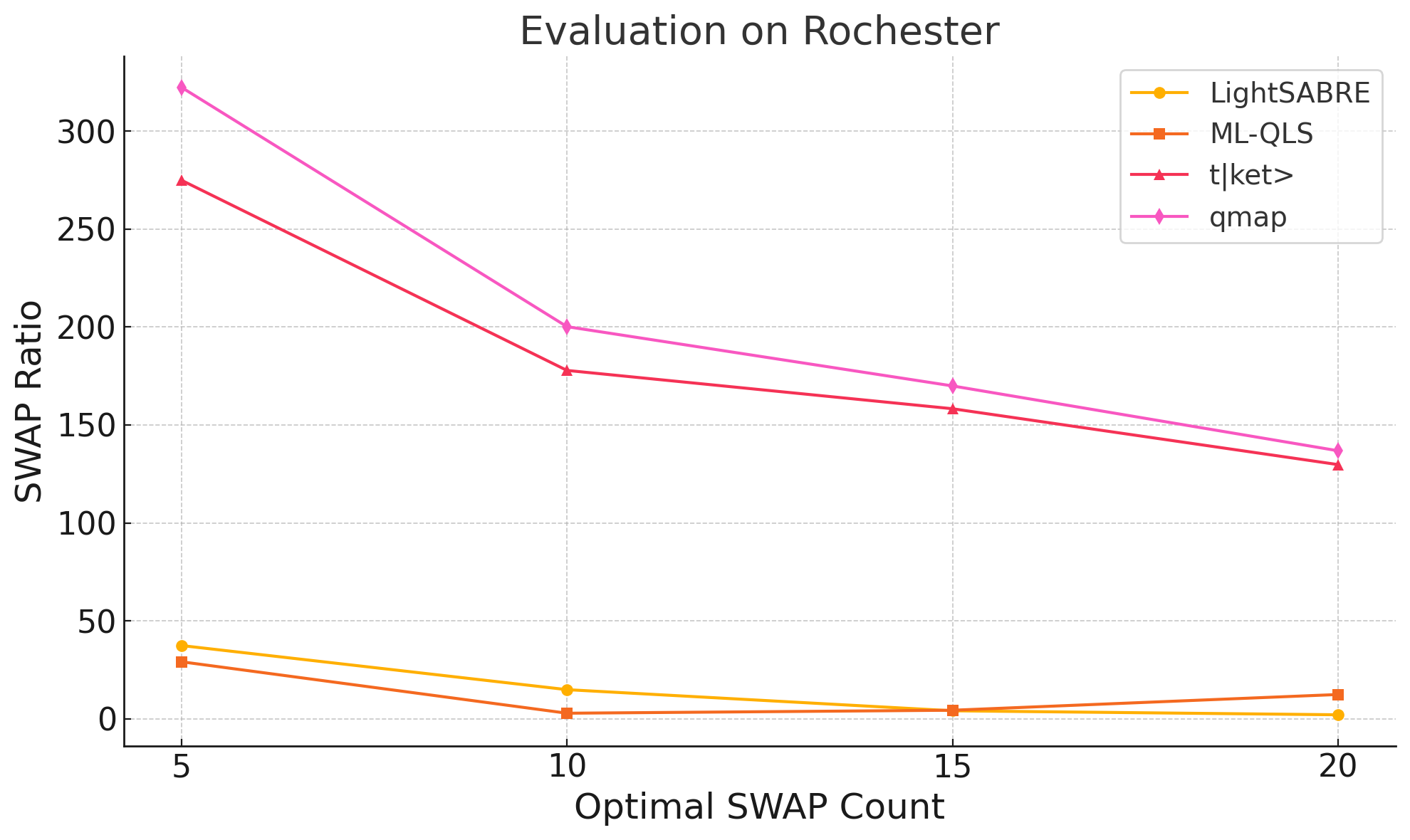}
}
\subfigure[]{
\includegraphics[width=0.9\linewidth]
{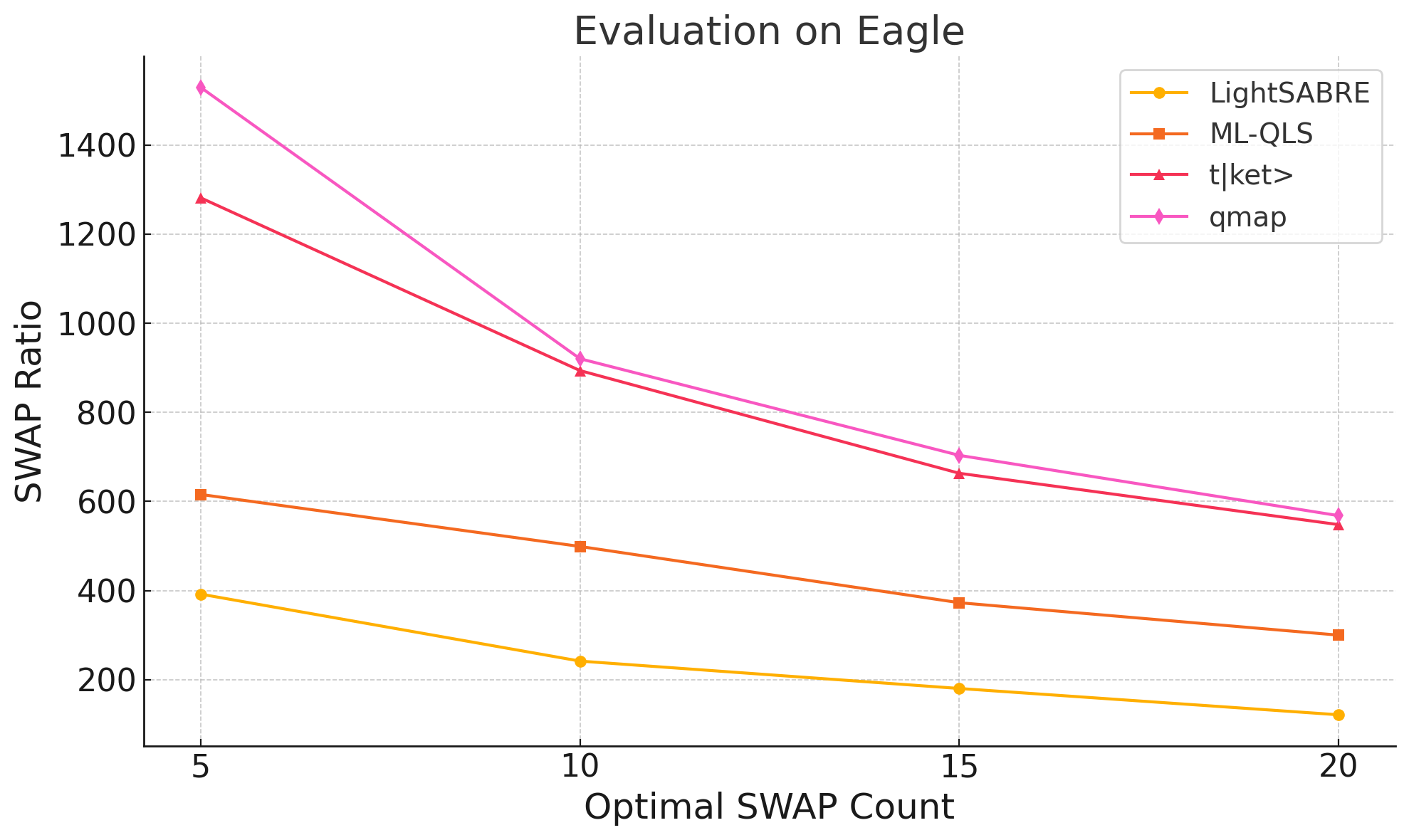}
}
\caption{Results for the heuristic experiments. (a) shows the result for the Aspen-4 architecture. (b) reveals the result for the Sycamore architecture. (c) is the result for the Rochester architecture. (d) shows the result for the Eagle architecture.}
\label{fig:exp2results}
\end{figure}

We verify the optimality of QUBIKOS in Section~\ref{subsec:optimal_study} via OLSQ2~\cite{lin2023olsq2} with PySAT version 1.8.dev13 and Z3-solver version 4.13.3.0.
In Section~\ref{subsec:qls_eval}, we evaluate the optimality gap of existing QLS tool, LightSABRE from Qiskit~\cite{zou2024lightsabre} version 1.2.4, t$\ket{\text{ket}}$~\cite{tket} version 1.34.0, QMAP~\cite{qmap} version 2.7.0, and ML-QLS~\cite{lin2024mlqls}.
All the experiments were conducted on 
Intel Xeon CPU at 2.20GHz.
For LightSABRE, we test it with 1000 trials.

\subsection{Optimality Study}
\label{subsec:optimal_study}

This experiment is to verify the optimal SWAP count for QUBIKOS circuits via the exact QLS tool OLSQ2~\cite{lin2023olsq2}.
We generate QUBIKOS circuits for the Rigetti Aspen-4 with 16 qubits and 3x3 grid architectures. 
We create 400 circuits per architecture, with 100 for each SWAP count from 1 to 4.
To guarantee a feasible runtime given OLSQ2's exponential complexity, we limit circuits to 30 two-qubit gates.

The optimality experiment confirms that all generated circuits have the desired SWAP counts. 
For each architecture, circuits designed with specific SWAP counts were verified by OLSQ2 to require the same SWAP counts, with no deviations observed. 
This consistency demonstrates that the benchmark circuits are optimally constructed.

\subsection{QLS Tool Evaluation}
\label{subsec:qls_eval}

This experiment is to measure the optimality gap of leading heuristic QLS tools.
We produce QUBIKOS circuits for four architectures: Rigetti Aspen-4, Google Sycamore with 54 qubits, IBM Rochester with 53 qubits, and IBM Eagle with 127 qubits. 
For each architecture, we make 40 circuits with 10 for each SWAP count of 5, 10, 15, and 20. 
The two-qubit gate count is set to 300 for Aspen-4, 1500 for Sycamore and Rochester, and 3000 for Eagle. The reason for varying circuit size on different architectures is because a larger architecture requires more gates on average to construct a section of the backbone circuit as the interaction graph requiring more connections on average to be non-isomorphic. 

The optimality gap is defined based on the SWAP ratio, which is calculated as the ratio of the average SWAP count to the optimal SWAP count: $\frac{\mathit{Average\; SWAP\; count}}{\mathit{Optimal\; SWAP\; count}}$. 
This ratio is always greater than or equal to 1, where a ratio of 1 indicates that the QLS tool finds the optimal solution.

Figure~\ref{fig:exp2results} presents the experimment results, showing how the optimality gap varies across different QLS tools.
Among the four QLS tools, ML-QLS achieves optimal results on Aspen-4 and Rochester, 
and LightSABRE performs the best on the other architectures.
ML-QLS performs similarly to LightSABRE, except on the Eagle architecture.
This is different from the results in the ML-QLS paper \cite{lin2024mlqls} because we uses a newer version of Qiskit and much higher trial number. So, LightSABRE has a better performance in our experiment but not in the ML-QLS experiment.
In contrast, t$\ket{\text{ket}}$ and QMAP exhibit a significantly larger optimality gap with an average of 185x and 207x respectively on Aspen-4 and as high as 846x and 930x on Eagle.

As the number of physical qubits increases, the optimality gap also tends to grow. This is supported by the fact that on Aspen-4, ML-QLS finds the optimal solutions in all cases. But on Sycamore, Rochester, and Eagle, LightSABRE has an optimality gap of 1.95x, 12.17x, and 233.97x respectively. 
However, the number of physical qubits is not the only factor; the connectivity of the coupling graph also plays a significant role.
For example, despite the Rochester architecture having a similar number of physical qubits to the Sycamore architecture, its optimality gap is six times larger. 
This is due to Rochester’s heavy-hex structure, which has sparse connectivity and has fewer axes of symmetry compared to Sycamore's dense grid connectivity. 
As a result, architectures with dense connectivity are more favorable for QLS tools, leading to smaller optimality gaps.

These findings highlight the utility of our examples for evaluating QLS tools, revealing that the optimality gap varies significantly depending on the tool. 
While LightSABRE and ML-QLS have smaller gaps, t$\ket{\text{ket}}$ and QMAP struggle to achieve near-optimal solutions. 
This emphasizes the need for more efficient QLS tools, particularly for larger architectures where the current tools demonstrate a large optimality gap. 
Additionally, the performance differences between the architectures suggest that the connectivity of the coupling graph influences how well QLS tools can minimize the SWAP count, with tools performing better on more connected architectures.

\subsection{Case Study: LightSABRE}
QUBIKOS not only provides a consistent evaluation of existing QLS tools but also offers insights for improving them. 
In this section, we perform a case study to analyze the results obtained from LightSABRE.


In one Aspen-4 experiment, SABRE identifies the optimal initial mapping but fails to find the correct routing, leading to a non-optimal solution. 
Figure~\ref{fig:circuit} illustrates the partial circuit where SABRE deviates from the optimal routing. 
Instead of choosing to SWAP($q_3, q_9$), LightSABRE opts for SWAP($q_2, q_9$), resulting in additional SWAP gates and non-isomorphic mappings. 
To understand why LightSABRE chooses this suboptimal SWAP, we calculate the cost of the two SWAP gates. 
SABRE evaluates three types of costs: basic, lookahead, and decay. 
In this case, both SWAPs are used for the first time and connect $q_2$ and $q_3$, so the basic and decay costs are the same. 
The difference lies in the lookahead cost. 
In LightSABRE, the lookahead is controlled by the extended set, where its size is the number of gates LightSABRE will consider beyond the execution layer. In Qiskit, the extended set size is 20 with a weight of 0.5. 
Thus, the lookahead cost for SWAP($q_3, q_9$) is 0.7, while for SWAP($q_2, q_9$), it is 0.65. 
This occurs because the qubit distance reduction for SWAP($q_2, q_9$) is greater when considering the first 20 gates equally, even though the two mappings cannot be executed under the same conditions, resulting in a non-optimal choice.

This issue could be mitigated by adding a decay factor to the lookahead cost, placing more weight on gates closer to the execution layer in $D$ while reducing the weight for gates farther from execution. 
With an appropriate decay factor, SABRE could find the optimal routing.

\begin{figure}[t]
\centering
\subfigure[]{\includegraphics[width=0.38\textwidth]{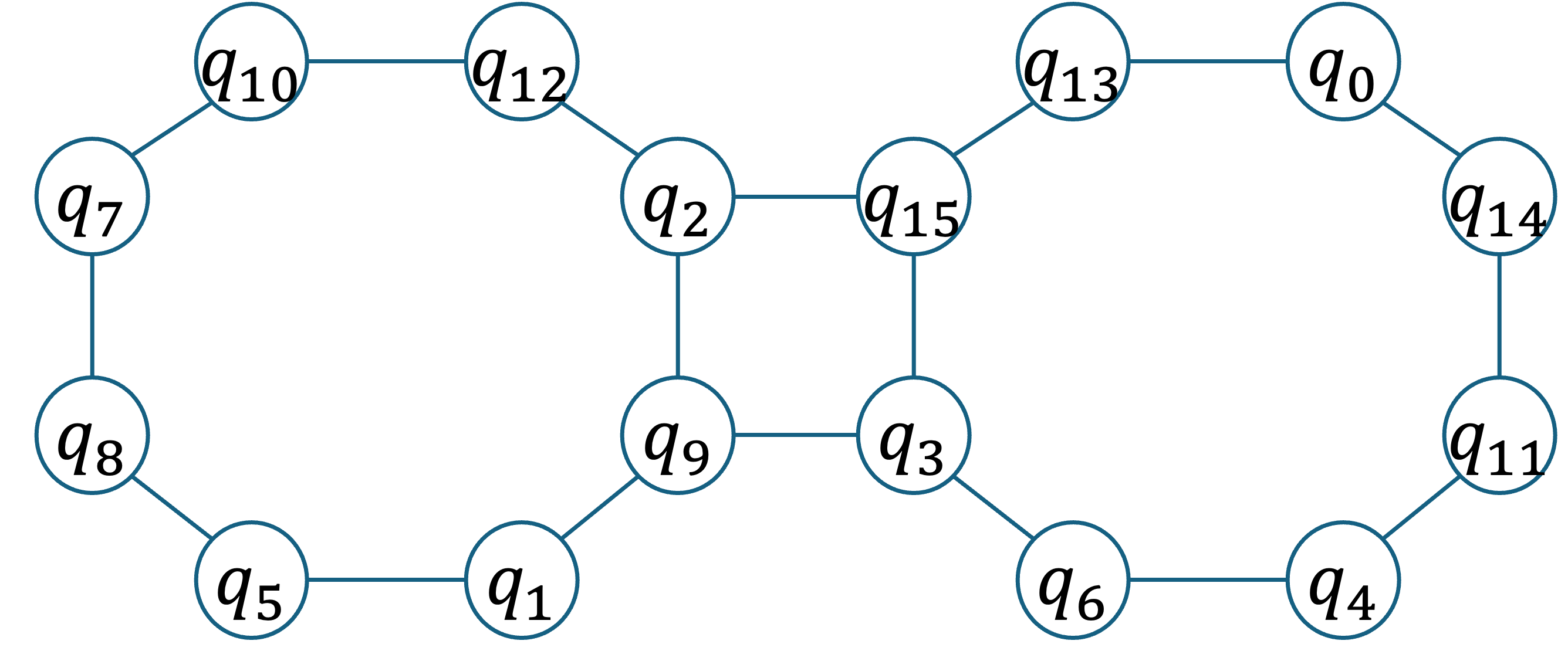}
}
\subfigure[]{
\includegraphics[width=0.4\textwidth]{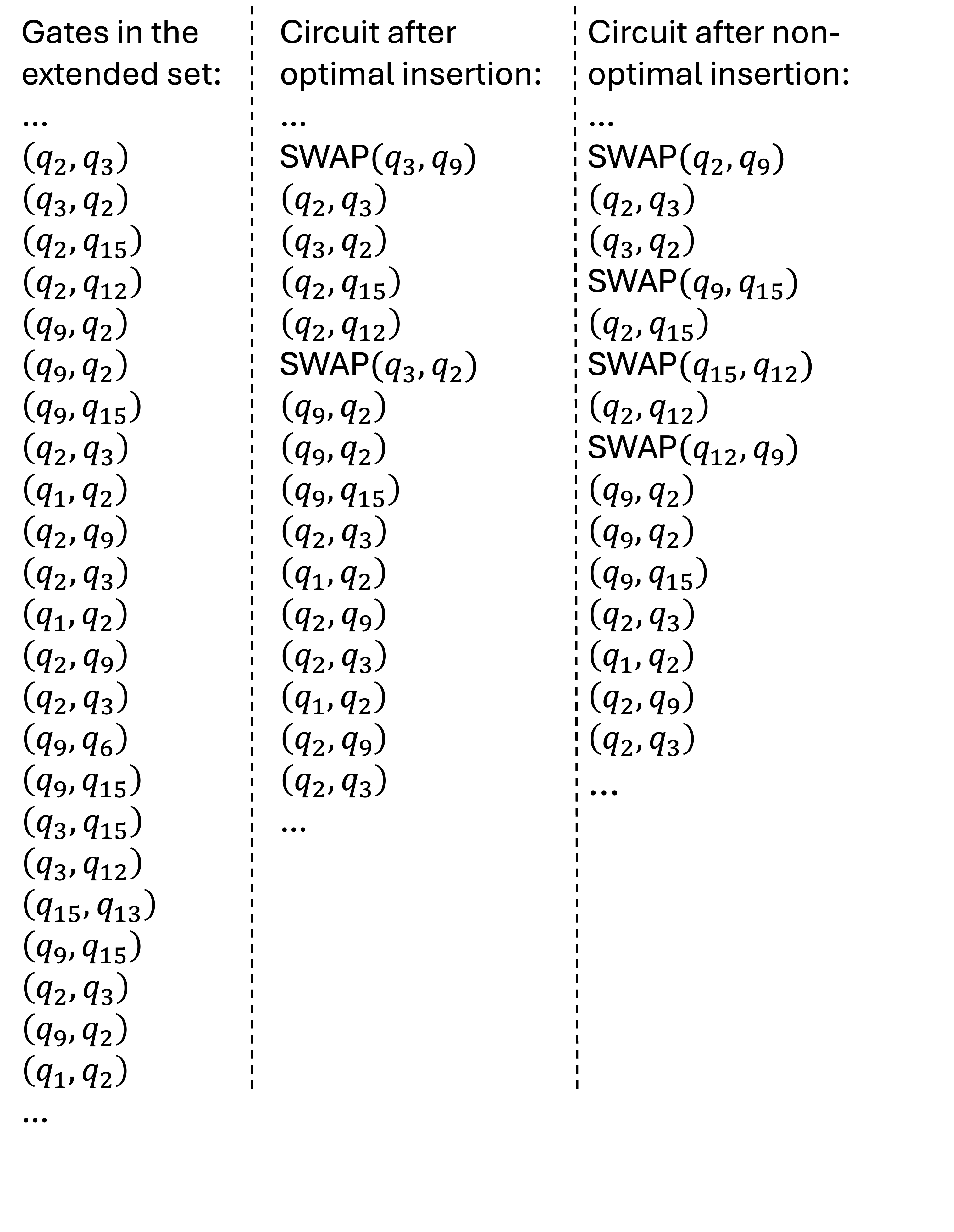}
}

\caption{\label{fig:circuit}(a).Qubit mapping before LightSABRE generates non-optimal SWAP gate. (b). Portion of the circuit that LightSABRE inserts non-optimal SWAP. The left circuit is the benchmark circuit before insertion. It includes 20 gates due to the extended set size.  The middle one is the circuit with optimal SWAP insertion. The right one is the non-optimal results from SABRE. }
\end{figure}

The above analysis is specific to SABRE, but it can be done with any QLS tools to improve its performances. Moreover, this example reveals QUBIKOS circuit has a non-trivial qubit routing even when the initial mapping is optimal. This means QUBIKOS can also be utilized to evaluate standalone routers like \cite{alpharouter} that require an initial mapping as input. Measuring the performance of these routers can be difficult since the quality of the initial mapping will directly impact its performance. With QUBIKOS, we can assess the routers by providing the initial mapping of the generated circuit. This way, we can test the routers with the optimal initial mapping, and any non-optimal results from the routers directly relates to the design of the router itself rather than the initial mapping.  

\section{Conclusion}
\label{sec:conclusion}
In this work, we introduce QUBIKOS, a benchmark with a known optimal SWAP count. We detail the method for generating QUBIKOS circuit and provide an overview for the proof. 
We verify its optimality experimentally and use it to evaluate the performance of four QLS tools. 
Our experiment results show that LightSABRE delivers the best performance but there is still an optimality gap that grows with the size of the architecture. 
Additionally, we illustrate how QUBIKOS can aid in improving QLS tools by analyzing a non-optimal case in LightSABRE, highlighting its potential to drive the development of more efficient QLS tools in the future.

\bibliographystyle{IEEEtran}
\bibliography{references}

\end{document}